\documentclass[journal,letterpaper]{IEEEtran}
\usepackage{amsmath}
\usepackage{amsthm}
\usepackage{amsfonts}
\usepackage{amssymb}
\usepackage{makeidx}
\usepackage{cite}
\usepackage{graphicx,bigints}
\usepackage{mathtools}
\usepackage{cases}
\usepackage{color}
\usepackage{hyperref}
\usepackage{bbm}
\usepackage[normalem]{ulem}
\usepackage{braket}
\usepackage{textcomp,gensymb}

\usepackage{enumitem}
\usepackage{datetime}
\usepackage{epstopdf}
\usepackage{xcolor}
\usepackage{lipsum}
\usepackage{multirow}
\usepackage[makeroom]{cancel}

\makeatletter
\DeclareFontFamily{U}{tipa}{}
\DeclareFontShape{U}{tipa}{m}{n}{<->tipa10}{}
\newcommand{\arc@char}{{\usefont{U}{tipa}{m}{n}\symbol{62}}}%

\newcommand{\arc}[1]{\mathpalette\arc@arc{#1}}

\newcommand{\arc@arc}[2]{%
	\sbox0{$\m@th#1#2$}%
	\vbox{
		\hbox{\resizebox{\wd0}{\height}{\arc@char}}
		\nointerlineskip
		\box0
	}%
}
\makeatother

\newtheorem{theorem}{Theorem}
\newtheorem{corollary}{Corollary}

\newtheorem{definition}{Definition}

\newtheorem{proposition}{Proposition}
\newtheorem{remark}{Remark}

\DeclareMathOperator*{\argmin}{arg\,min}

\DeclareMathOperator{\cL}{\mathcal{L}}

\DeclareMathOperator{\cR}{\mathcal{R}}

\DeclareMathOperator{\SINR}{\mathrm{SINR}}

\DeclareMathOperator{\bR}{\mathbb{R}}

\DeclareMathOperator{\bP}{\mathbf{P}}
\DeclareMathOperator{\bS}{\mathbb{S}}
\DeclareMathOperator{\ind}{\mathbbm{1}}
\DeclareMathOperator{\bE}{\mathbf{E}}

\newcommand*\diff{\mathop{}\!\mathrm{d}}

\newcommand*\nnb{\nonumber}

\definecolor{sandy}{HTML}{E6E2AF}
\definecolor{stone}{HTML}{A7A37E}
\definecolor{beach}{HTML}{EFECCA}
\definecolor{ocean}{HTML}{046380}
\definecolor{diver}{HTML}{002F2F}

\definecolor{Firenze1}{HTML}{468966}
\definecolor{Firenze2}{HTML}{FFF0A5}
\definecolor{Firenze3}{HTML}{FFB03B}
\definecolor{Firenze4}{HTML}{B64926}
\definecolor{Firenze5}{HTML}{8E2800}
\definecolor{mediumpersianblue}{rgb}{0.0, 0.4, 0.65}
\definecolor{hongik}{HTML}{004498}
\definecolor{cobalt}{rgb}{0.0, 0.28, 0.67}
\definecolor{burntorange}{rgb}{0.8, 0.33, 0.0}

\definecolor{ultramarineblue}{rgb}{0.25, 0.4, 0.96}

\title{Dynamical System Characterization of Heterogeneous Walker Satellite Networks: \\An Orbit-Aware Stochastic Geometry Perspective}

\author{Chang-Sik Choi
  	 and Francois Baccelli
	\IEEEcompsocitemizethanks{\IEEEcompsocthanksitem{Chang-Sik Choi is an Assistant Professor at School of Electrical Engineering, KAIST, South Korea.  (email: changsik@kaist.ac.kr).
			Francois Baccelli is with Telecom Paris and Inria Paris (email: francois.baccelli@ens.fr).
}}}
\begin{document}
	\maketitle

\begin{abstract}
Heterogeneous and in particular multi-altitude low Earth orbit (LEO) satellite constellations exhibit complex spatial and temporal structures, which require new modeling tools for their performance analysis. In this paper, we develop an orbit-aware stochastic geometry framework modeling today's LEO satellites on various orbits and various altitudes. In particular, we characterize such a system as the superposition of multiple Walker point processes and formulate it as a dynamical system determined by an initial condition and the rotation speeds of satellites and Earth. We show that when the speeds are rationally commensurable, the proposed satellite system is periodic. Then, we show that the system is ergodic when the speeds are rationally independent, establishing a theoretical link between time averages of the system and the expectation of it under the invariant measure. We derive the nearest-satellite distance distribution of a typical receiver at a given latitude and analyze the signal to interference-plus-noise ratio (SINR) coverage probability of the typical receiver. We then derive the ergodic throughput of the downlink communication to the typical receiver. Overall, the proposed framework offers a rigorous and tractable tool for analyzing downlink performance in Walker-type heterogeneous LEO satellite networks.

\end{abstract}

\begin{IEEEkeywords}
	Heterogeneous satellite network, Satellite communications, Walker constellation, Dynamical system, Stochastic geometry, SINR coverage probability, Ergodicity
\end{IEEEkeywords}

\section{Introduction}
\subsection{Motivation and Related Work}
\IEEEPARstart{S}{atellites} in Earth orbit support diverse applications, from communication to sensing, data harvesting, and Earth observation~\cite{8700141,9210567}. In particular, low Earth orbit (LEO) satellites have drawn significant attention for communication purposes due to their short propagation distance and low latency. As the cost of space launch decreases, the number of satellites is expected to grow, making their spatial distribution very complex. A widely used basis for structured satellite deployment is the Walker constellation~\cite{walker1984satellite}, in which satellites are periodically-spaced on periodically-spaced orbital planes. Nevertheless, practical satellite distributions become increasingly complex to be captured merely by a single Walker model, since most constellations are multi-altitude multi-inclination and also since satellites from multiple operators coexist across different orbital planes and altitudes~\cite{10436074,8626457,9502642,FCCSTARLINK,FCCKuiper,FCCBoeing}. Since the spatial distribution of satellites directly affects the performance of communications, constructing a tractable model for today’s complex satellite constellations is a key first step toward tractable analysis and optimization of current and forthcoming satellite networks.

To model such satellite distributions, many approaches have been proposed and studied. In the context of stochastic geometry~\cite{baccelli2010stochastic,haenggi2012stochastic,chiu2013stochastic}, a spherical binomial point process was initially used to model the locations of satellites~\cite{9079921,9177073,9218989,9497773,9678973}. The binomial point process features points randomly distributed on a spherical surface. Using this model, \cite{9079921,9177073,9218989,9497773,9678973} analyzed key communication performance metrics. Extending the spherical binomial model, a spherical Poisson point process was employed to describe satellite locations as a random number of uniformly distributed points on a sphere. The difference between the binomial and Poisson models lies in the number of satellites, while the interactions between points are identical for both models: specifically, satellites are uniformly and independently distributed on a spherical surface, hence forbidding any clustering or repulsion. These previous studies also showed that the uniformly distributed points can be used to approximate the performance of several existing satellite networks. Nevertheless, the binomial or Poisson model has no orbital structure for satellites, so that the motion of satellites cannot be examined under these models. Furthermore, in commercial LEO constellation deployments~\cite{FCCSTARLINK}, where many satellites share each common orbits, there is strong spatial clustering between satellites, and, as widely reported in~\cite{5208529,8340239,8357962}, the clustering of transceivers leads to communication performance very different from that obtained from network layouts without spatial clustering. Therefore, it is important to describe the orbital structure of satellite networks, not only to capture the clustering of satellites, but also to establish a unified framework to examine the motions of satellites.

Recently, a spherical Cox model was developed in~\cite{10410220,10557592,10703111} to account for the orbits of LEO or MEO satellites. In this model, satellites are generated on a geometric orbital structure so that the satellite points are always located on the orbits. Several studies used this model to analyze downlink communications from heterogeneous satellites \cite{10410220,10557592,10703111}, coverage behavior combined with aerial platforms or with satellite cells~\cite{10771991,Choi_Coveragecell}, and delay-tolerant data harvesting~\cite{10436110} based on LEO or MEO satellites. In addition to the fact that the Cox model characterizes the orbital structure of satellite networks, the Cox model also allows one to assess the dynamical aspects of the LEO satellite network architecture, which had been neglected due to the absence of the appropriate tools. A temporal analysis of the Cox model in stochastic geometry was conducted in~\cite{10436110}, where both snapshot-based performance and time-domain behavior were examined. Although the Cox model introduces the orbital structure and locally approximates forthcoming constellations, the orbit distribution used in these studies is assumed to be isotropic for analytical tractability. Consequently, the Cox model remains unable to capture the periodic structure and non-isotropic characteristics of orbital layouts in today’s complex satellite deployments.

In a related context, to describe non-isotropic orbital distribution, \cite{11159552} characterized the spatial distribution of a Walker constellation over time as a stochastic geometry model with a dynamical system structure. Specifically, \cite{11159552} identified the conditions under which the single-inclination and single-altitude (SISA) Walker point process is periodic or not, thereby providing the mathematical justification for spatial and temporal analysis. \cite{11159552} considered a simple Walker constellation where all orbits have the same inclination and the same radius, and satellite phase differences are all identical.

Nevertheless, practical constellations are not composed of such an ideal Walker constellation with SISA orbits. Most existing or forthcoming real-world deployments feature heterogeneous constellations where multiple orbital structures are combined across several layers\cite{FCCSTARLINK,FCCKuiper,FCCBoeing}. For instance, in Starlink’s second generation deployment plan~\cite{FCCSTARLINK}, the satellite constellation is heterogeneous and divided into three independent Walker components, each characterized by different inclination angles, satellite altitudes, orbit counts, and satellite counts. To characterize the heterogeneous orbital foundations and their satellites, we develop a new model that reproduces the heterogeneous and in particular multi-altitude nature of current and future constellations within the framework of stochastic geometry. Our new orbital structure enables the analysis of both the time-domain and the spatial-domain performance of downlink communication in a unified framework.

\subsection{Theoretical Contributions}
The theoretical contributions of this work are three-fold.

\subsubsection{A Tractable Model for Heterogeneous Multi-altitude Satellite Networks}
We develop a stochastic geometry framework that models the spatial layout of modern heterogeneous satellite networks with periodically distributed satellites on orbits of various altitudes. Specifically, our model is based on an independent superposition of Walker point processes, where each Walker point process features evenly spaced SISA orbits and evenly spaced satellites conditionally on the orbits. The proposed point process model is formulated as a dynamical system, parameterized by its initial condition and by speed constants which capture the rotation speeds of Earth and satellites. 

\par By combining multiple Walker point processes in a single unified framework, we present an analytical approach to fully characterize both the spatial configuration and its temporal evolution. In contrast to~\cite{11159552}, our model allows satellite inclinations and altitudes to take multiple values, thereby precisely capturing the heterogeneous structure of real-world practical satellite networks alluded to above. In the proposed framework, the speeds of satellites on different orbital layers can be handled as individual parameters, so that the complex motion of multi-altitude heterogeneous satellite networks is described naturally within our unified framework.

\subsubsection{Time Domain Analysis based on Dynamical System Theory}
Under our unified space–time framework, we analyze the temporal behavior of heterogeneous satellite networks constructed from multiple Walker bases. We derive the exact conditions under which the resulting dynamical system is periodic, showing that it depends on the Earth’s rotation speed and the speeds of satellites at different altitudes. More precisely, if those speeds are rationally commensurable, the system is periodic; on the other hand, if the speeds are rationally independent, the system is minimally ergodic (see below). Furthermore, in this case, we determine the dynamical system is ergodic with respect to (w.r.t.) this invariant measure~\cite{Katok}. The implications are significant: based on Birkhoff’s theorem, in the ergodic scenario, long-term time averages coincide with the expectations under the invariant distribution of the network architecture. This leads to a formal mathematical justification for the long-term time-average analyses that have been commonly used and adopted in network analysis.

\subsubsection{Performance Analysis of Downlink Communications}
Using the exact conditions derived for our dynamical system, we conduct a system-level analysis of downlink communications from satellites to receivers on Earth. Under the ergodic condition, we analyze the distance from the typical receiver to its nearest satellite. We obtain this nearest-distance distribution as a function of key parameters. In particular, since the satellite distribution is non-isotropic, we show that receivers at different latitudes observe different satellite distributions and thus different nearest-distance distributions. In addition, under ergodicity, we show that the derived nearest distance distribution is identical to the time average of the nearest distance seen from the typical receiver, measured over a very long time horizon. Under ergodicity, we analyze the signal to interference-plus-noise ratio (SINR) coverage probability of the typical receiver under various fading scenarios and obtain the coverage formula as a function of key system parameters, then connect the ergodic SINR coverage probability of the typical receiver to the time average of SINR coverage indicators of receivers on Earth. We analyze the large-scale behavior of the coverage probability w.r.t satellite altitudes and orbit inclinations. Finally, we derive the ergodic rate of the proposed network architecture. These performance metrics are crucial for building and optimizing communication networks based on LEO or MEO satellites, as well as any architecture influenced by interference from such systems.

\section{System Model}
\subsection{Spatial Model}\label{S:II-A}

In this paper, the origin of the three-dimensional Euclidean space is at the center of the Earth. The reference space rotates with the Earth’s spin so that a fixed location on Earth has constant $(x,y,z)-$coordinates. The $xy$-plane is the equatorial plane, the $x$-axis is the longitude reference axis (East), and the $z$-axis is the North pole axis. The receivers are randomly distributed at fixed locations on the surface of Earth of radius $r_e = 6371$ km. 

This paper assumes that all satellite orbits are circular. The orbits do not rotate along with the Earth and the satellites on the orbits rotate around the Earth. The rotating speeds of satellites on their orbits are determined by orbit altitudes.

We adopt the standard notation that the orbit longitude is defined as the angle from the $x$-axis to the ascending point where the ascending point of an orbit is the point at which the orbit intersects the reference plane\footnote{The orbit longitude in this paper corresponds to RAAN(Right Ascension of Ascending Node)}. The inclination of an orbit is defined as the angle that the orbital plane makes with the equatorial plane. Finally, the phase of a satellite is defined as the angle from the ascending point to the satellite at that time, measured in the orbital plane of the satellite.

In this paper, we model the satellite constellation as an independent collection of SISA Walker point processes, denoted by $\Psi = \sum_{i=1}^{I}\Psi_i$, where $\Psi_i$ denotes a SISA Walker point process. Each SISA satellite points $\Psi_i$ is on $S_{r_i}$, the sphere of radius $r_i > r_e$. For a given $\Psi_i$, let $N_i$ be the number of orbits and $M_i$ be the number of satellites on each orbital plane. The inclination angle of all orbits is denoted by $\phi_i$. For each SISA Walker point process $\Psi_i,$ the longitudes of the orbits are evenly spaced over $ 2\pi$ and the phases of the satellites on each orbit are evenly spaced over $2\pi$.

We assume that the longitude of the $1$-st orbit of the $i$-th SISA $\{\theta_{i,1}\}_{i=2,...,I}$ is given as a deterministic function of $\theta_{1,1}$, the longitude of the $1$-st orbit of $1$-st SISA Walker point process: namely, $\{\theta_{i,1}=f_i(\theta_{1,1}) \mod 2\pi/N_i\}_{i=2,...,I}$, where $f_i$ is some deterministic function. This means that the longitude of any orbit of $\Psi$ is represented as a function of $\theta_{1,1}$ the longitude of the $1$-st orbit of the SISA Walker point process $\Psi_1.$

Let $\theta_{1,1}$ be the initial longitude of the $1$-st orbit of $1$-st SISA Walker point process at time zero. Then, for $i=1,...,I,$ the initial longitude of the $j$-th orbit of the $i$-th SISA Walker point process is given by
\begin{align}
	\theta_{i,j} &= \theta_{i,1}+\frac{2\pi (j-1)}{N_{i}} \ j=2,...,N_i \label{eq:long}\\
				 &= f_i(\theta_{1,1}) +\frac{2\pi (j-1)}{N_i}  \ j=2,...,N_i.\label{eq:long2}
\end{align}
where $N_i$ is the number of orbits of the $i$-th SISA Walker point process. 
 
Similarly, let $\omega_{i,j,1}$ be the initial argument angle of the $1$-st satellite of the $1$-st orbit of the $i$-th SISA Walker point process where $i=1,...,I$ at time zero. For simplicity, we also write ${\omega}_i \equiv  \omega_{i,j,1}$. Then, the argument angle of the $k$-th satellite on the $j$-th orbit of the $i$-th SISA Walker point process is 
\begin{align}
	\omega_{i,j,k} &= {\omega}_{i,j,1} + \frac{2\pi {(k-1)}}{M_i},   k=2,...,M_i,\label{eq:13}\\
	&= {\omega}_{i} + \frac{2\pi {(k-1)}}{M_i},   k=2,...,M_i,\label{eq:3}
\end{align}
where $M_i$ is the number of satellites per orbit of the $i$-th SISA Walker point process. 

\begin{figure}
	\centering
	\includegraphics[width=1\linewidth]{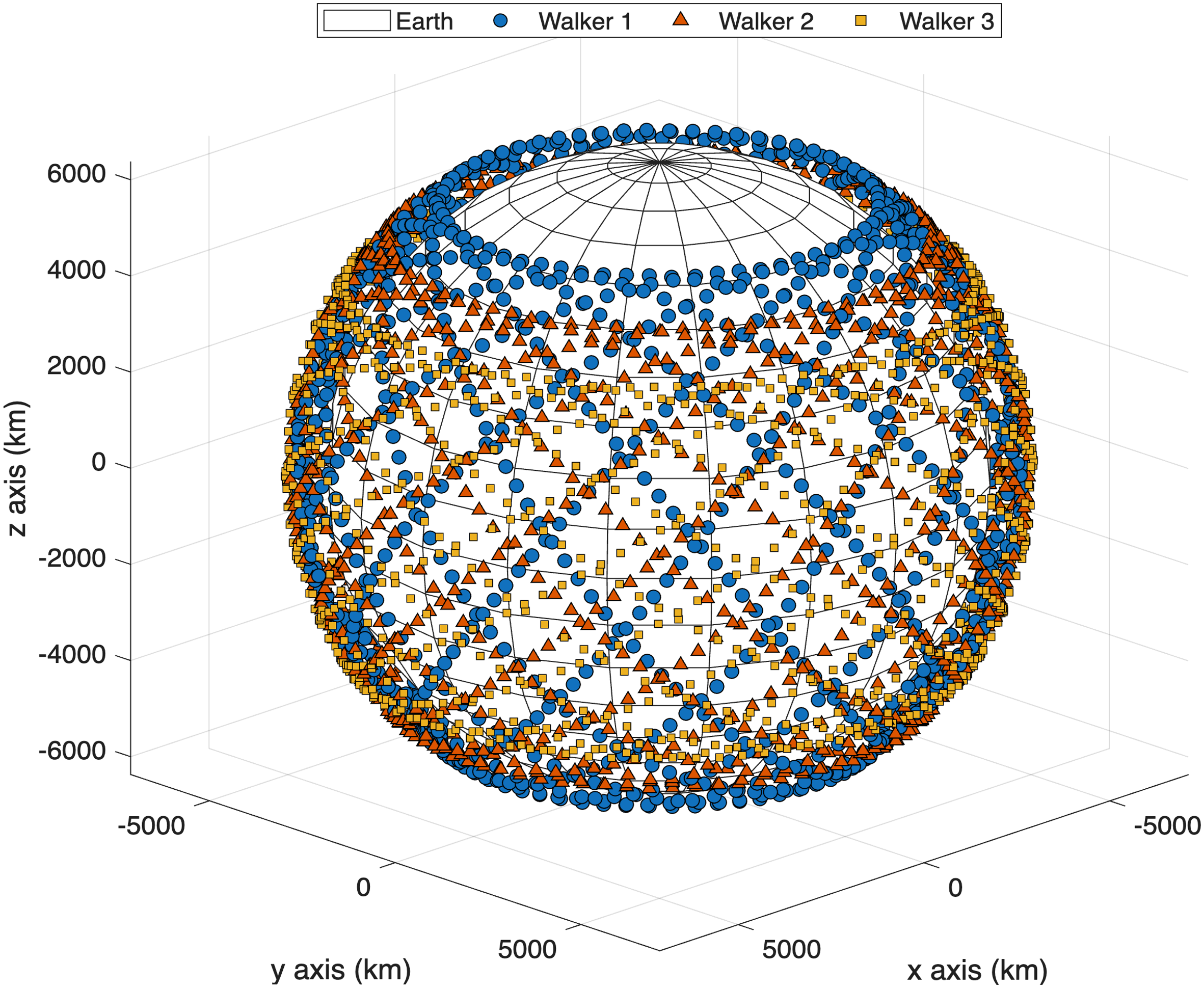}
	\caption{The proposed satellite network model based on $3$ SISA Walker point processes.}
	\label{fig:figure1}
\end{figure}

Fig.~\ref{fig:figure1} illustrates the proposed network model with $3$ SISA Walker point processes: $\Psi=\Psi_1+\Psi_2+\Psi_3$, where for $\Psi_1$,  $N_{1}=15$, $M_{1}=100$, $\phi_{1}=53^{\circ}$, and $r_1=6921$ km. For $\Psi_2,$ $N_{2}=15$, $M_{1}=100$, $\phi_{1}=43^{\circ}$, and $r_2=7021$ km. Finally, for $\Psi_3, $ $N_{3}=15$, $M_{3}=100$, $\phi_{3}=33^{\circ}$, and $r_3=7121$ km. In this figure, we use $\theta_{2,1}=\theta_{1,1}+({2\pi}/{N_1})({1}/{3})$ and $\theta_{3,1}=\theta_{1,1}+({2\pi}/{N_1})({2}/{3}).$

Let $X_{i,j,k}$ denote the coordinates of the $k$-th satellite on the $j$-th orbit of the $i$-th SISA Walker point process.
Let $\omega_{i,j,k}$ denote the phase of this satellite and let $\theta_{i,j}$ denote the longitude of the
$j$-th orbit of the $i$-th SISA Walker point process. Then, the 
Cartesian coordinates of the initial location of the satellite $X_{i,j,k}$ are given by
\begin{align}
	x&=r_{i}\rho_{i,j}\cos(\theta_{i,j}+\breve{\theta}_{i,j,k}),\label{xyz}\\
	y&=r_{i}\rho_{i,j}\sin(\theta_{i,j}+\breve{\theta}_{i,j,k}),\label{xyz2}\\
	z&=r_{i}\sin(\omega_{i,j,k})\sin(\phi_{i}),\label{xyz3}
\end{align}
where we use the notation 
\begin{eqnarray*}
	\rho_{i,j} & = & \sqrt{\cos^2(\omega_{i,j,k})+\sin^2(\omega_{i,j,k})\cos^2(\phi_{i})},  \\
	\breve{\theta}_{i,j,k} & = & \mathrm{atan2}(\sin(\omega_{i,j,k})\cos(\phi_{i}),\cos(\omega_{i,j,k})). 
\end{eqnarray*}
In a vector notation, we can also write 
\begin{multline}
	\vec{X}_{i,j,k}= (r_{i}\rho_{i,j}\cos(\theta_{i,j}+\breve{\theta}_{i,j,k}),\\r_{i}\rho_{i,j}\sin(\theta_{i,j}+\breve{\theta}_{i,j,k}), r_{i}\sin(\omega_{i,j,k})\sin(\phi_{i}))
\end{multline}

Finally, the point process $\Psi_i$ of the $i$-th SISA constellation is defined as
\begin{equation}
	\label{eq:fin}
	\Psi_i=\sum_{j=1}^{N_i} \sum_{k=1}^{M_i} \delta_{X_{i,j,k}}
\end{equation}
and the whole constellation point process as $\Psi=\sum_{i=1}^I \Psi_i$.

\subsection{Satellite Dynamic}\label{s:dynamic}
In the proposed system, there are two types of motions. The first one is the motion of satellites along
their orbits and the second one is the motion (spin) of Earth.
First, the law of gravity dictates the speed of satellites and for a given Walker point process $\Psi_{i}$ 
of radius $r_i$, the speed of the satellites is denote by $v_{i}$. Let $v_e$ be the speed of Earth rotation.
Here, $v_i$ and $v_e$ are angular speeds $( \text{rad}/\text{sec}).$ Then, for the $i$-th SISA Walker point process,
the longitude of the $j$-th orbit and the phase of the $k$-th satellite on the $j$-th orbit at time $t$ are given by
\begin{align}
	\theta_{i,1}(t) &= \left.\theta_{i,1}-v_e t \mod 2\pi/N_i\right.,\\
	\omega_{i,j,1}(t) &= \left.\omega_{i,j,1}+v_it \mod 2\pi/M_i \right.,
\end{align}
respectively, where $\theta_{i,1}$ is the initial longitude of the $1$-st orbit of the $i$-th SISA Walker
point process, and $\omega_{i,j,k}$ is the initial phase of the $k$-th satellite on the $j$-th orbit of 
the $i$-th SISA Walker point process as defined above.

Then, using the relation in Eq. \eqref{eq:long}, the longitude of the $j$-th orbit of the $i$-th constellation at time $t$ is
\begin{align}
	\label{eq:10}
	&\theta_{i,j}(t) = \theta_{i,1}(t) + \frac{2\pi (j-1)}{N_{i}}, \  j=2,...,N_i,
\end{align}
for $i=1,...,N_i.$
 
Using Eq. \eqref{eq:3}, the phase angle of the $k$-th satellite on the  $j$-th orbit of the $i$-th constellation at time $t$ is
\begin{align}
	\label{eq:11}
	\omega_{i,j,k}(t) &= \omega_{i}(t) + \left.\frac{2\pi(k-1)}{M_i} \right., \  k=2,...,M_i.
\end{align}
for $i=1,...,N_i$.

The equations \eqref{eq:long}--\eqref{eq:fin} applied above
to the initial condition can be applied to the last angles 
to define the positions $X_{i,j,k}(t)$, $i=1\ldots,I$, $j=1,\ldots,N_i$ and $k=1,\ldots,M_i$ of
the satellites of the constellation at time $t$ and
then the associated point processes $\Psi_i(t),\ i=1,\ldots,I$ and $\Psi(t)=\sum_{i=1}^I \Psi_i(t)$.

\subsection{Association and Propagation Model}
We assume that each receiver is associated with its closest satellite transmitter. 
Let $d$ be the distance from a satellite transmitter to its associated receiver on Earth.
At any given time, the received signal power is modeled as $ p G(\eta(d)) H d^{-\alpha}$, 
where $p$ denotes the $1$-meter reference average received signal power,
$G(\eta(d))$ denotes the antenna gain over a distance $d$,
$H$ denotes the random variable for general small-scale fading independent of everything else\footnote{Specifically, $H$ is a sample from a stationary and mixing fading process independent of everything else.}, 
and $\alpha$ denotes the path-loss exponent, greater than or equal to $2$.
We denote by $\bar{F}_H(x)$ the complementary cumulative distribution
function (CCDF) of the random variable $H$ evaluated at $x$.

Here, the antenna gain $G(\eta(d))$ is a function of $\eta(d)$ the angle from 
the transmitter’s boresight direction to its receiver at distance $d$ from the transmitter.
When the satellite transmitter’s orbit radius is $r_i$, we use trigonometry
to obtain the relation $\eta(d) = \arccos\!\left({d^2 + r_i^2 - r_e^2}/{(2 d r_i)}\right)$.

Motivated by practical scenarios discussed in \cite{38821}, we assume a simple model
of the form $G(\eta(d)) = g_m$ if $\eta(d) < \eta_c$ and $G(\eta(d)) = 1$, otherwise.
For downlink communications, we assume that the serving transmitter can implement
the maximum gain toward its dedicated receiver, denoted by $g_M$, where we assume $g_M > g_m \geq  1$.

\subsection{Performance Metric}
In the first part, we characterize the proposed model as a dynamical system with its initial
condition and the dynamics determined by the relative motion of Earth and that of the satellites on orbits. 
We provide the invariant measure of the dynamical system and derive the exact conditions under which the
dynamical system is periodic or not and is ergodic w.r.t. this invariant measure or not.

In the second part, under the condition that the dynamical system is ergodic, we analyze the
performance of downlink communications from satellites to receivers on Earth. In particular,
we analyze the distance from the typical receiver to its association satellite, the SINR
coverage probability of the typical receiver, and the Laplace transforms of the interference
and the total received power. These performance metrics are obtained as functions of key
network parameters. By establishing the ergodicity condition of the proposed dynamical system
w.r.t. the invariant measure and then deriving the network performance using it, we not 
only analyze the performance of the proposed network seen from the typical receiver but
also derive the time averages of any given receiver over a long time. 

\section{Dynamical System Analysis}
Let $\mathbb S $ be the $(I+1)$-dimensional torus defined by the product space formed by the
orbit longitude $\theta$ and the satellite phases $\{\omega_{i}\}_{i=1,...,I}$ as follows:
\begin{equation} 
	\mathbb S = \left[0,\frac{2\pi}{N_1}\right) \times \prod_{i=1}^I\left. \left[0,\frac{2\pi}{M_i}\right)\right.\label{eq:S}.
\end{equation}

Based on the motions of both Earth and satellites, the dynamic of the satellites 
is characterized by the continuous flow $\{R_t(x)\}_t$ on $\mathbb S$. Here $R_t(x)$ denotes
the state at time $t$ for the initial condition 
$x=\{\hat{\theta}_{1,1},\{\hat{\omega}_{i,j,k}\}_{i=1,...,I}\}\in \mathbb S$, that is
\begin{multline}
	R_t(x)
	=\left(\hat{{\theta}}_{1,1}-v_et \right) \mod{\frac{2\pi}{N_i}},\\
	 \left\{\left(\hat{\omega}_{i,j,1}+v_it \right)\mod{\frac{2\pi}{M_i}}\right\}_{i=1,...,I},
\end{multline}
for $t>0$. As above, $v_e$ is the Earth rotation angular speed and $v_i $ is the satellite rotation
angular speed of $\Psi_i$, the $i$-th SISA Walker point process.
\begin{definition}
The speed constants $v_e,v_1,\dots,v_I$ are said to be rationally independent if $a_0v_e+a_1v_1+\cdots+a_{I}v_I=0$
 for the integers $a_0,a_1,...a_I\in \mathbb Z$, implies $a_i=0$ for all $i$. Otherwise, the speed constants are said to be
rationally dependent. As a special case of rational dependence, the speed constants are said to be rationally
commensurable if they are all rational multiples of a common constant.
\end{definition}
Note that for $I=1$, rational dependence is equivalent to rational commensurability. 
However for $I>1$, rational commensurability is a strict subclass of rational dependence.
\begin{definition}
A dynamical system $\alpha_t(x)$ on a compact state space $\mathbb X$ with initial condition $x$ is minimal
if every {orbit} is dense in $X$, namely, for all $ x\in \mathbb X$
	\begin{equation}
		\overline{\{\alpha_t(x):t\in\bR^+\}}=\mathbb X,
	\end{equation} 
where $\overline{\{\alpha_t(x), t\in\bR^+\}}$ denotes the closure of the orbit with the initial condition $x\in \mathbb X$
\cite{katok1995introduction}.
\end{definition}

\begin{proposition}\label{Prop:1}
If the speeds $\{v_e,v_1,v_2,...,v_I\} $ are rationally independent, the dynamical system $\{R_t\}$ 
is aperiodic and minimal on the torus $\mathbb S$.
If the speeds are rationally commensurable, for all initial conditions, the dynamical system is periodic.
If the speeds are rationally dependent but not rationally commensurable,
the dynamical system is aperiodic and not minimal. 
\end{proposition}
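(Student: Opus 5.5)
We rescale the torus $\mathbb S$ to the standard torus $\mathbb T^{I+1}=(\mathbb R/\mathbb Z)^{I+1}$. Use the coordinates $u_0=N_1\hat\theta_{1,1}/(2\pi)$ and $u_i=M_i\hat\omega_{i}/(2\pi)$. In these coordinates the flow $R_t$ becomes the linear (Kronecker) flow
\begin{equation*}
	u\mapsto u+t\,\beta \mod 1, \qquad \beta=\Bigl(-\tfrac{N_1v_e}{2\pi},\tfrac{M_1v_1}{2\pi},\dots,\tfrac{M_Iv_I}{2\pi}\Bigr).
\end{equation*}
The rescaling multiplies each speed by a nonzero rational factor ($N_1$ or $M_i$) times the common constant $1/(2\pi)$. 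Therefore rational independence, rational dependence and rational commensurability of $\{v_e,v_1,\dots,v_I\}$ carry over verbatim to the components of $\beta$. The statement thus reduces to the classical trichotomy for linear flows on tori, and the three cases can be handled separately.

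\emph{Commensurable case.} Write $\beta_l=q_l c$ with $q_l\in\mathbb Q$ and $c\neq 0$, and let $L$ be a common denominator of the $q_l$. Then $T=L/|c|$ satisfies $T\beta\in\mathbb Z^{I+1}$, so $R_T(x)=x$ for every initial condition $x$. Hence every orbit is periodic, with a common period.

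\emph{Independent case.} This is Kronecker--Weyl. Suppose some point returned, $R_T(x)=x$ for a $T>0$. Then $T\beta\in\mathbb Z^{I+1}$, so all $\beta_l$ are rational multiples of $1/T$, which forces a nontrivial integer relation among them and contradicts independence; so the flow is aperiodic. For minimality, Weyl's criterion applies: for every nonzero $k\in\mathbb Z^{I+1}$, independence gives $k\cdot\beta\neq0$. Hence
\begin{equation*}
	\frac1T\int_0^T e^{2\pi i k\cdot(u+t\beta)}\diff t=e^{2\pi i k\cdot u}\,\frac{e^{2\pi i Tk\cdot\beta}-1}{2\pi i T k\cdot\beta}\to 0 .
\end{equation*}
So time averages of trigonometric polynomials converge to their Lebesgue integrals, uniformly in $u$. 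By density this extends to all continuous functions. Each forward orbit is therefore equidistributed, and in particular dense. This gives minimality, and also foreshadows unique ergodicity with respect to Lebesgue (Haar) measure.

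\emph{Dependent but not commensurable case.} This is the main obstacle. There is a nonzero $k\in\mathbb Z^{I+1}$ with $k\cdot\beta=0$, so the continuous function $u\mapsto e^{2\pi i k\cdot u}$ is invariant under the flow. It is nonconstant, so each orbit stays in a level set $\{k\cdot u\equiv k\cdot u(0) \bmod 1\}$. That level set is a finite union of proper closed subtori, so no orbit is dense and the flow is not minimal.

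For aperiodicity, suppose some orbit were periodic with period $T>0$. Then $T\beta\in\mathbb Z^{I+1}$, which makes all $\beta_l$ rational multiples of $1/T$, i.e.\ commensurable, a contradiction. Because the flow is a translation, a period for one point is a period for all points, so no orbit is periodic.

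One delicate point needs care: the case where some speed component is zero, in which "commensurable'' should be read in the standard way. The transfer between the original and rescaled speeds should also be stated explicitly. I would handle both with a short preliminary lemma. Stronger information, namely that orbits are dense in the subtorus given by the annihilator of $\beta$'s relation lattice, is not needed for the claim as stated.
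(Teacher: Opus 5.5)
Your proposal is correct and follows the same route as the paper. The paper identifies $\{R_t\}$ as a linear flow on the $(I+1)$-dimensional torus and cites the classical periodic/minimal dichotomy; you make this explicit by rescaling to $(\mathbb R/\mathbb Z)^{I+1}$, checking that the arithmetic conditions on the speeds carry over, and proving minimality via Weyl's criterion, and you also cover the rationally dependent but non-commensurable case, which the paper's one-line proof leaves unaddressed.
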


\begin{IEEEproof}
The satellite dynamic is defined on the $(I+1)$-dimensional torus $\mathbb S$ where $I$ dimensions
come for satellite rotation of each of the $I$ SISA Walker point processes and the extra dimension
comes for the Earth rotation. It is well known that the linear flow on the torus is periodic if
the speed constants are rationally commensurable, whereas it is minimal 
if the speed constants are rationally independent \cite{Katok}.
\end{IEEEproof}

In the rest of this paper, we mostly focus on the following two cases: rationally independent and rationally
commensurable speeds. The third case is interesting too but skipped for the safe of compactness.

Let $\overline{\mathcal Q}$ be 
the product of the uniform distributions on the torus $\mathbb S$ defined as follows:
\begin{equation}\label{eq:Q}
\overline{\mathcal Q}\sim \mathrm{Uniform}\left(0,\frac{2\pi}{N_1}\right)\times \prod_{i=1}^I
\mathrm{Uniform}\left(0,\frac{2\pi}{M_i}\right).
\end{equation}

\begin{theorem}\label{Theorem:1:basic}
Regardless of the relations on the speeds, $\overline{\mathcal{Q}}$ is an 
invariant measure for the dynamical system $\{R_t\}$ on $\mathbb S$.
\end{theorem}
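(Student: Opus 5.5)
The plan is to show that every $R_t$ preserves $\overline{\mathcal Q}$ directly, by recognizing $R_t$ as a translation on a compact abelian group and using translation invariance of Haar measure.

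\emph{Step 1: $\mathbb S$ is a group and $R_t$ is a translation.} Identify $\mathbb S$ with the product group
\begin{equation*}
G=\mathbb R/\tfrac{2\pi}{N_1}\mathbb Z\times\prod_{i=1}^I \mathbb R/\tfrac{2\pi}{M_i}\mathbb Z.
\end{equation*}
Under this identification, $\overline{\mathcal Q}$ is the normalized Haar measure of $G$, namely the product of normalized Lebesgue measures on the circle factors. By its definition, $R_t(x)=x+u_t$, where
\begin{equation*}
u_t=\bigl(-v_et \bmod \tfrac{2\pi}{N_1},\ \{v_it \bmod \tfrac{2\pi}{M_i}\}_{i}\bigr)\in G.
\end{equation*}
The shift $u_t$ does not depend on $x$. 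So $\{R_t\}$ is a one-parameter group of translations of $G$, and $R_{t+s}=R_t\circ R_s$.

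\emph{Step 2: invariance on rectangles.} It suffices to check that $\overline{\mathcal Q}(R_t^{-1}(A))=\overline{\mathcal Q}(A)$ for measurable rectangles $A=A_0\times\prod_i A_i$. Measurable rectangles form a $\pi$-system generating the Borel $\sigma$-algebra, so Dynkin's $\pi$-$\lambda$ theorem then extends the equality to all Borel sets. For a rectangle, $R_t^{-1}(A)=\prod_\ell (A_\ell-u_{t,\ell})$. This factorization holds because each coordinate evolves independently of the others; the evolution is a product flow.

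\emph{Step 3: one circle factor.} It then suffices to show that Lebesgue measure on $[0,c)$ taken modulo $c$ is invariant under $y\mapsto y+a \bmod c$. Split $A_\ell$ into the part that wraps around and the part that does not. Each part is a translate of a Borel set in $\mathbb R$, and Lebesgue measure is translation invariant, so the total measure is preserved. Normalizing by $c$ gives invariance of each uniform factor. Multiplying over the factors gives invariance of $\overline{\mathcal Q}$.

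Nothing in this argument used any arithmetic relation among $v_e,v_1,\dots,v_I$, which is why the conclusion holds regardless of the relations on the speeds. The only point needing care is the modular wrap-around in Step 3; I do not expect a real obstacle. An alternative route is to check the identity
\begin{equation*}
\int f\circ R_t\,\diff\overline{\mathcal Q}=\int f\,\diff\overline{\mathcal Q}
\end{equation*}
on characters $f(x)=e^{\mathrm{i}\langle k,x\rangle}$. For $k\neq0$ both sides vanish, and for $k=0$ both equal $1$. The identity then extends to all continuous $f$ by Stone--Weierstrass density, which again gives invariance.
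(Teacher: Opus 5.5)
Your proposal is correct and takes the same approach as the paper. The paper's one-line proof observes that $\overline{\mathcal Q}$ is the Haar measure on the torus $\mathbb S$, on which $R_t$ acts by translation. Your Steps 2--3 (or your alternative character argument) simply verify by hand the translation invariance that the paper takes for granted.
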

\begin{IEEEproof}
$\overline{\mathcal{Q}}$ is the Haar measure on $\mathbb S$
and thus it is an invariant measure for this dynamical system \cite{walters2000introduction,Katok}.
\end{IEEEproof}

\begin{theorem}\label{Theorem:speedratio}
If the speeds $\{v_e,v_1,v_2,...,v_I\} $ are rationally independent,
the dynamical system $\{R_t\}$ is ergodic w.r.t. the invariant measure $\overline{\mathcal Q}$.
More precisely, $\{R_t\}$ is uniquely ergodic.
\end{theorem}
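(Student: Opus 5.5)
The plan is to reduce $\{R_t\}$ to a standard linear flow on the unit torus and then use the Weyl equidistribution argument. That argument gives uniform convergence of time averages of continuous functions for \emph{every} initial condition, and this yields unique ergodicity directly. Ergodicity w.r.t. $\overline{\mathcal Q}$ then follows because, by Theorem~\ref{Theorem:1:basic}, $\overline{\mathcal Q}$ is invariant and must be the only invariant measure.

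\textbf{Step 1 (normalization).} Consider the map $\Phi:\mathbb S\to[0,1)^{I+1}$ given by $\Phi(\theta,\omega_1,\dots,\omega_I)=\bigl(\tfrac{N_1\theta}{2\pi},\tfrac{M_1\omega_1}{2\pi},\dots,\tfrac{M_I\omega_I}{2\pi}\bigr)$. It is an isomorphism of compact groups, and it sends $\overline{\mathcal Q}$ to Lebesgue (Haar) measure $\lambda$. Under $\Phi$ the flow becomes $T_t(u)=u+t\beta \bmod 1$, with
\[
\beta=\Bigl(-\tfrac{N_1 v_e}{2\pi},\tfrac{M_1v_1}{2\pi},\dots,\tfrac{M_Iv_I}{2\pi}\Bigr).
\]
Suppose $k\in\mathbb Z^{I+1}$ satisfies $k\cdot\beta=0$. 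This is the relation $-k_0N_1v_e+\sum_i k_iM_iv_i=0$, whose coefficients are integers. Rational independence of $v_e,v_1,\dots,v_I$ then forces $k_0N_1=0$ and $k_iM_i=0$, hence $k=0$. So $k\cdot\beta\neq0$ for every $k\neq0$. This bookkeeping step is the one point that needs care: the integer factors $N_1,M_i$ and the sign of $v_e$ must not destroy independence, and they do not.

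\textbf{Step 2 (Weyl criterion for characters).} For $e_k(u)=\exp(2\pi \mathrm{i}\,k\cdot u)$ with $k\neq 0$, compute
\[
\frac1T\int_0^T e_k(T_t u)\,\diff t=e_k(u)\,\frac{\exp(2\pi \mathrm{i}\,k\cdot\beta\,T)-1}{2\pi \mathrm{i}\,k\cdot\beta\,T}.
\]
Its modulus is at most $1/(\pi|k\cdot\beta|T)\to0$, uniformly in $u$. This matches $\int e_k\,\diff\lambda=0$. For $k=0$ the statement is trivial.

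\textbf{Step 3 (extension to continuous functions).} By Stone--Weierstrass, trigonometric polynomials are uniformly dense in $C(\mathbb T^{I+1})$. An $\varepsilon/3$ argument then gives, for every continuous $f$,
\[
\frac1T\int_0^T f(T_tu)\,\diff t\to\int f\,\diff\lambda \quad\text{uniformly in } u .
\]

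\textbf{Step 4 (uniqueness of the invariant measure).} Let $\mu$ be any $\{T_t\}$-invariant probability measure. Invariance gives $\int f\,\diff\mu=\int \frac1T\int_0^T f(T_tu)\,\diff t\,\diff\mu(u)$ for every $T$. Letting $T\to\infty$ and using bounded convergence together with Step 3, we get $\int f\,\diff\mu=\int f\,\diff\lambda$ for all continuous $f$. By the Riesz representation theorem, $\mu=\lambda$. Pulling back by $\Phi$, $\overline{\mathcal Q}$ is the unique invariant measure of $\{R_t\}$, i.e., $\{R_t\}$ is uniquely ergodic.

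\textbf{Step 5 (ergodicity).} Suppose there were an invariant set $A$ with $0<\overline{\mathcal Q}(A)<1$. Then the normalized restriction $\overline{\mathcal Q}(\cdot\cap A)/\overline{\mathcal Q}(A)$ would be a second invariant probability measure, contradicting Step 4. Hence $\{R_t\}$ is ergodic w.r.t. $\overline{\mathcal Q}$.
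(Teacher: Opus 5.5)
Your proof is correct. It reaches the result by a more self-contained route than the paper's. The paper's proof is a one-line appeal to the two-dimensional case from the earlier Walker work and to the classical unique-ergodicity result for discrete-time translations on a torus. It leaves implicit both the passage from the continuous flow $\{R_t\}$ to a discrete-time torus map and the check that the rescaled frequencies remain non-resonant. That passage requires either a Poincar\'e section and suspension argument, or a sampling time $t_0$ with $t_0\,k\cdot\beta\notin\mathbb Z$ for all $k\neq 0$.

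You instead work directly in continuous time. Step 1 makes the missing bookkeeping explicit: the factors $N_1$, $M_i$ and the sign of $v_e$ do not destroy rational independence, so the non-resonance condition $k\cdot\beta\neq 0$ for $k\neq 0$ is exactly the hypothesis. The Weyl computation on characters, Stone--Weierstrass, and Riesz then give unique ergodicity with no reduction step.

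Each route has its advantages. Yours yields uniform convergence of time averages of continuous functions for every initial condition. This is precisely the property the paper states immediately after the theorem and uses in the remarks following Corollary~\ref{T:rev2:conjugatcy}. It also gives an explicit $O(1/T)$ rate for each character. The paper's route buys brevity and consistency with its earlier two-dimensional treatment, at the cost of leaving these details to the reader.
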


\begin{IEEEproof}
The proof is similar to that of the two-dimensional torus proved in
\cite{11159552} where we employ the classical result on the discrete dynamic on two-dimensional torus \cite{Katok}. 
\end{IEEEproof}

Ergodicity implies that for all non-negative measurable functions $h$,
and for all $x$ in a subset of $\mathbb S$ of $\overline{\mathcal Q}$ measure 1, we have 
	\begin{equation}
		\lim_{T\to \infty} \frac 1 T \int_0^T h(R_t(x)) \mathrm{d} t =
		\int h(r) {\overline{\mathcal Q}}(\mathrm{d} r) .
	\end{equation}
That is, long term time averages match space averages (i.e., averages w.r.t. $\overline{\mathcal Q}$) for
$\overline{\mathcal Q}$-almost all initial conditions.
Unique ergodicity \cite{walters2000introduction} is stronger. 
It implies that, for all initial conditions $x\in\mathbb S$,
and for all continuous non-negative functions $f$ on $\mathbb S$, 
	\begin{equation}
		\lim_{T\to \infty} \frac 1 T \int_0^T f(R_t(x)) \mathrm{d} t =
		\int f(r) {\overline{\mathcal Q}}(\mathrm{d} r) .
	\end{equation}
The key new thing is the extension from ``almost all" to "all" initial conditions when restricting to
continuous functions.

Above, we built the flow $\{R_t\}$ on the probability space $(\mathbb{S},\mathcal{S}, \overline{\mathcal Q})$
where $\mathbb S$ is a complete and separable metric space (which is also compact),
$\mathcal S$ is the Borel sigma-field of $\mathbb S$, and $\overline{\mathcal Q}$ is as defined above.

Consider now the family of point processes in the space of counting measures with
a constant number $K_i=N_iM_i$ of points on each of the spheres of radii $r_i$, $i=1,\ldots,I$,
with $N_i$ equally spaced orbits, all with the same inclination $\phi_i$, and such that all orbits
have $M_i$ equally spaced satellites and the same phase for the first satellite on all $N_i$ orbits.
Note that the space $\mathbb W$ of Walker point processes with these characteristics
space is also a compact, complete and separable metric space.
Let $\mathcal{W} $ denote the Borel $\sigma$-algebra on $\mathbb W$.

Let $\Xi:\mathbb{S} \to \mathbb{W}$ be the map defined by Eqs. 
\eqref{eq:long}--\eqref{eq:fin}, which associates the point process $\Psi$ to
the angles $\theta$, $\{\omega_i\}_{i=1,\ldots,I}$, where $\theta$ is the longitude of the $1$-st
orbit of the $1$-st SISA Walker constellation and $\omega_i$ is the phase of the $1$-st
satellite of the $1$-st orbit of the $i$-th constellation where $i=1,\ldots,I$. 

Let $\overline{\mathcal{Q}}_{\Psi}$ be the pushforward on $(\mathbb W,\mathcal{W})$ of
$\overline{\mathcal{Q}}$ by the map $\Xi$. Similarly, let $\{\Psi_{t}\}$ be the satellite point process at time $t$, which is defined as follows: 
\begin{equation}
	\Psi_t = \Xi(R_t), \quad t\ge 0.
\end{equation}
If $R_0=x$, then $\Psi_0=\Xi(x)$, namely $\Xi(R_t(x))=\Psi_t(\Xi(x))$ for all $x$ and all $t$. 

\begin{theorem}\label{T:rev1:conjugatcy}
	The dynamical systems 
	$\{R_t\}$ on $(\mathbb S,\mathcal{S},\overline{\mathcal Q})$ and
	$\{\Psi_t\}$ on $(\mathbb{W},\mathcal{W}, \overline{\mathcal{Q}})$ are isomorphic.
\end{theorem}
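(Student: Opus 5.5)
The plan is to show that the map $\Xi:\mathbb S\to\mathbb W$ is a measurable bijection, with measurable inverse, that intertwines the two flows and carries $\overline{\mathcal Q}$ to $\overline{\mathcal Q}_\Psi$. These are exactly the requirements for a measure-theoretic isomorphism of dynamical systems. Three of the four properties are essentially given by construction.

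First, $\Xi(R_t(x))=\Psi_t(\Xi(x))$ for all $x$ and $t$ holds by definition, since $\Psi_t$ is defined as $\Xi(R_t)$. Second, $\overline{\mathcal Q}_\Psi$ is by definition the pushforward of $\overline{\mathcal Q}$ under $\Xi$, so measure preservation is immediate; invariance of $\overline{\mathcal Q}_\Psi$ under $\{\Psi_t\}$ then follows from Theorem~\ref{Theorem:1:basic} together with the intertwining relation. Third, measurability: $\Xi$ is given by the trigonometric formulas \eqref{xyz}--\eqref{xyz3}, which depend continuously on $(\theta_{1,1},\omega_1,\dots,\omega_I)$ once the points are placed on the torus $\mathbb S$, provided each $f_i$ is measurable (I would assume $f_i$ continuous). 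Because the reductions mod $2\pi/N_i$ and $2\pi/M_i$ only relabel orbits and satellites, the counting measure $\sum\delta_{X_{i,j,k}}$ is unchanged by them. Hence $\Xi$ is continuous from $\mathbb S$ into the space of counting measures with the vague topology, and in particular Borel.

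The main step is injectivity, together with the identification of $\mathbb W$ with the image $\Xi(\mathbb S)$. I would restrict $\mathbb W$ to $\Xi(\mathbb S)$ if necessary, which is legitimate because the isomorphism need only hold modulo null sets. To recover the state from $\Psi=\Xi(x)$, I would separate the components by radius: the radii $r_i$ are distinct because the altitudes differ, so $\Psi_i=\Psi|_{S_{r_i}}$. From $\Psi_1$, each orbital plane is determined by its satellites whenever $M_1\ge3$, so the set of ascending nodes, hence $\theta_{1,1}$ mod $2\pi/N_1$, is determined. The satellite phases measured from the ascending nodes then give $\omega_1$ mod $2\pi/M_1$. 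For $i\ge2$, $\theta_{i,1}=f_i(\theta_{1,1})$ is already fixed, and the phases on $S_{r_i}$ give $\omega_i$ mod $2\pi/M_i$.

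The obstacle is to make this inversion rigorous. Two points need care. The first is degenerate inclinations: if $\phi_i=0$, every plane coincides with the equator and the longitude is not identifiable; if $\phi_i=\pi/2$, the ascending node is still defined, but ambiguities may appear. These cases form a fixed parameter choice rather than a null set, so I would assume $0<\phi_i<\pi/2$ as in practical constellations. The second is the case where two components share a radius, which I would exclude or handle by using distinct inclinations to separate them. Once injectivity holds, $\Xi$ is a continuous bijection from the compact space $\mathbb S$ onto the Hausdorff space $\Xi(\mathbb S)$, so it is a homeomorphism. Its inverse is therefore Borel, which gives the isomorphism, and in fact a topological conjugacy.
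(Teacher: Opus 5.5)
Your proposal takes the same route as the paper's proof, which simply asserts that $\Xi$ is a continuous bijection with continuous inverse that preserves the measures and intertwines the flows. You supply what the paper omits: a reconstruction argument for injectivity, the compact-to-Hausdorff step for $\Xi^{-1}$, and the observation that invertibility genuinely requires nondegenerate parameters (e.g.\ for $\phi_i=\pi/2$ and $N_i$ even, the phases $\omega_i$ and $\pi-\omega_i$ give the same point set).

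You share one implicit assumption with the paper. Continuity of $\Xi$ across the identification $\theta_{1,1}\sim\theta_{1,1}+2\pi/N_1$ of the torus, and hence your topological-conjugacy claim, also needs $f_i(\theta+2\pi/N_1)\equiv f_i(\theta)\pmod{2\pi/N_i}$. For the flow-compatible choice $f_i(\theta)=\theta+c_i$ this means $N_1\mid N_i$; otherwise $\Xi\circ R_t$ jumps at the seam and no longer tracks the physical motion. The measure-theoretic isomorphism itself survives without this condition, because $\Xi$ remains Borel and injective, and an injective Borel map between Polish spaces has a Borel inverse on its image.
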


\begin{IEEEproof}
The map $\Xi$ is continuous from the metric space $\mathbb{S}$ to the metric space $\mathbb{W}$.
It is invertible and its inverse $\Xi^{-1}$ is also continuous. 
This map preserves the measures and the flows as well.
Consequently, these two measure-preserving dynamical systems are isomorphic.  
\end{IEEEproof}

\begin{corollary}\label{T:rev2:conjugatcy}
For all speeds, $\overline{\mathcal{Q}}_{\Psi}$ is an invariant measure for  $\{\Psi_t\}$.
If the speeds $\{v_e,v_1,v_2,...,v_I\}$ are rationally independent, the dynamical system  
$\{\Psi_t\}$ is minimally and uniquely ergodic on $(\mathbb{W},\mathcal{W})$ w.r.t.
the invariant measure $\overline{\mathcal{Q}}_\Psi$. 
If the speeds $\{v_e,v_1,v_2,...,v_I\} $ are rationally commensurable, the dynamical system $\{\Psi_t\}$
is periodic for all initial conditions.
\end{corollary}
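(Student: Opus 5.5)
The plan is to transport every property of $\{R_t\}$ on $\mathbb S$, established in Proposition~\ref{Prop:1} and Theorems~\ref{Theorem:1:basic} and~\ref{Theorem:speedratio}, to $\{\Psi_t\}$ on $\mathbb W$ through the map $\Xi$. The tool is the isomorphism of Theorem~\ref{T:rev1:conjugatcy}. In fact it gives slightly more than a measure-theoretic isomorphism: $\Xi$ is a homeomorphism intertwining the flows, $\Xi\circ R_t=\Psi_t\circ\Xi$. Since $\Xi$ is a homeomorphism, both measure-theoretic properties and topological ones (density of orbits, continuity of test functions) transfer.

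\textbf{Invariance.} For any speeds and any $A\in\mathcal W$, the intertwining relation gives $\Psi_t^{-1}(A)=\Xi(R_t^{-1}(\Xi^{-1}(A)))$. Hence
\begin{equation*}
\overline{\mathcal Q}_\Psi(\Psi_t^{-1}(A))=\overline{\mathcal Q}(R_t^{-1}(\Xi^{-1}(A)))=\overline{\mathcal Q}(\Xi^{-1}(A))=\overline{\mathcal Q}_\Psi(A),
\end{equation*}
where the middle equality is Theorem~\ref{Theorem:1:basic}.

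\textbf{Rationally independent speeds.}
\begin{itemize}
\item Minimality: by Proposition~\ref{Prop:1}, every orbit $\{R_t(x)\}$ is dense in $\mathbb S$. A continuous surjection maps dense sets to dense sets, so $\{\Psi_t(\Xi(x))\}=\Xi(\{R_t(x)\})$ is dense in $\mathbb W$. Every point of $\mathbb W$ has the form $\Xi(x)$, so every orbit of $\{\Psi_t\}$ is dense.
\item Unique ergodicity: let $\mu$ be any $\{\Psi_t\}$-invariant Borel probability measure on $\mathbb W$. Its pushforward $\mu\circ\Xi$ under $\Xi^{-1}$ is $\{R_t\}$-invariant, by the same computation as above. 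Theorem~\ref{Theorem:speedratio} then forces $\mu\circ\Xi=\overline{\mathcal Q}$, hence $\mu=\overline{\mathcal Q}_\Psi$.
\item Uniform averages: for continuous $f$ on $\mathbb W$, the function $f\circ\Xi$ is continuous on $\mathbb S$. The time averages $\frac1T\int_0^T f(\Psi_t(w))\,\diff t$ are therefore those of $f\circ\Xi$ along $R_t(\Xi^{-1}w)$, and they converge for every initial condition.
\end{itemize}

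\textbf{Rationally commensurable speeds.} Proposition~\ref{Prop:1} gives some $T>0$ with $R_T(x)=x$ for every $x$. Then $\Psi_T(\Xi(x))=\Xi(R_T(x))=\Xi(x)$, so every $w\in\mathbb W$ is $T$-periodic.

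\textbf{Main obstacle.} The only non-formal step is justifying that $\Xi$ really is a homeomorphism onto $\mathbb W$, that is, injective with continuous inverse. Theorem~\ref{T:rev1:conjugatcy} asserts this, and I would rely on it. If it needed checking, I would proceed as follows.
\begin{itemize}
\item The reductions modulo $2\pi/N_1$ and $2\pi/M_i$ in the definition of $\mathbb S$ are exactly what removes the relabelling ambiguity of orbits and satellites, which gives injectivity.
\item The functions $f_i$ must be continuous on the quotient circle. This is needed so that $\Xi$ is continuous in the vague topology of counting measures.
\item Compactness of $\mathbb S$ together with the Hausdorff property of $\mathbb W$ then gives continuity of the inverse automatically.
\end{itemize}
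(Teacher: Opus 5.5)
Your proposal is correct and takes the same route as the paper. The paper's proof simply invokes the isomorphism of Theorem~\ref{T:rev1:conjugatcy} and transfers Proposition~\ref{Prop:1}, Theorem~\ref{Theorem:speedratio} and, implicitly, Theorem~\ref{Theorem:1:basic} from $\{R_t\}$ to $\{\Psi_t\}$; you additionally make explicit that minimality and unique ergodicity carry over because $\Xi$ is a homeomorphism intertwining the flows, not merely a measure-theoretic isomorphism.
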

\begin{IEEEproof}
The result follows from the fact that the two systems are isomorphic.
Then, we use Proposition \ref{Prop:1} and Theorem \ref{Theorem:speedratio} to complete the proof. 
\end{IEEEproof}

\begin{remark}
If the speeds $\{v_e,v_{1},...,v_{I}\} $ are rationally independent, the system is minimally and
uniquely ergodic w.r.t $\overline{\mathcal{Q}}_{\Psi}$, which implies that,
for all continuous functions $f:\mathbb W \to \mathbb R^+$, for all initial conditions $w\in \mathbb W$,
\begin{equation}
\lim_{T\to \infty} \frac 1 T \int_0^T f(\Psi_t(w)) \mathrm{d} t =
\int f(\psi) {\overline{\mathcal Q}}_{\Psi}(\mathrm{d} \psi)  ,
\label{eq:ergodic}
\end{equation}
where the left-hand side is the time average of the function $f$ for the initial condition $w$ 
and the right-hand side is the mean of the function $f$ w.r.t. the invariant measure ${\overline{\mathcal Q}}_{\Psi}$.
\end{remark}

\begin{remark}
Consider a fixed location $U=(x,y,z)$ in the referential adopted in this paper, for instance a static location on Earth (hence rotating with it). Let us define the flow 
$$\Psi^U_t = \sum_{i,j,k} \delta_{X_{i,k,k}(t) - U},$$ 
namely the constellation ephemeris point process seen by a static observer at $U$, evaluated in
this basic referential with the origin shifted to $U$.
The space in which $\Psi^U$ lives is again a metric space $\mathbb{W}^U$ of counting measures.
There is again a continuous bijection $\chi$ from  $\mathbb{S}$ to $\mathbb{W}^U$ 
such that $\chi^{-1}$ is continuous.
Let $\overline{Q}_{\Psi^U}$ be the pushforward of $\overline{Q}$ by the map $\chi$ on the
Borel sigma-field of $\mathbb{W}^U$. 

Then, by construction, $\Psi_t^U=\chi(R_t(x))$ for all $t$ and $x\in \mathbb S$. Therefore,
the dynamical systems $\{R_t\}$, and $\{\Psi_t^U\}$ are isomorphic \cite{walters2000introduction}.
In all cases, $\overline{Q}_{\Psi^U}$ is an invariant measure for $\{\Psi^U_t\}_t$.
	
If the speeds are rationally commensurable, $\{\Psi^U_t\}$,
the satellite ephemeris seen from the fixed observer, is periodic. In this case, this dynamical system admits
another invariant measure for each initial condition, which is an averaging over the period \cite{katok1995introduction}.

If the speeds are rationally independent, the dynamical system $\{\Psi_t^U\}$
is minimal and uniquely ergodic w.r.t. its unique invariant measure $\overline{Q}_{\Psi^U}$.
\end{remark}

\section{Performance Analysis: Ergodic Scenario}
Here, we investigate the downlink communication performance of the proposed system when it is ergodic, namely, the speeds are rationally independent. Specifically, we derive the performance of communications from satellites to the typical receiver at
\begin{equation}
	\vec{u}=(r_e\cos(l_u),0,r_e\sin(l_u)), 
\end{equation}
where $l_u$ denotes the latitude of the receiver. 

We will derive the distribution of the distance to the nearest satellite (Theorem 4) and the distribution of the SINR of the typical receiver (Theorem 5).

These performance metrics are derived as the spatial average w.r.t. the unique invariant measure $\overline{\mathcal{Q}}$ in Eq. \eqref{eq:Q}. For a simpler notation, let  ${{\theta}}$ denote a uniform random variable between $0$ and $2\pi/N_1$ and ${{\omega}}_i$ denote an independent uniform random variable between $0$ and $2\pi/M_i$ for $i=1,...,I$. Then,  $\overline{\mathcal{Q}}$ is characterized as the distribution of ${{\theta}}\times\prod_{i=1}^I{{\omega}}_{i}$, namely 
\begin{equation}
	\overline{\mathcal{Q}} = \textrm{Dist} \left({{\theta}}\times\prod_{i=1}^I{{\omega}}_{i}\right)
\end{equation}

\subsection{Distance to Nearest Satellite}
Let $S_{r_i}$ be the sphere of radius $r_{i}$ centered at the origin. Then, for $r_{i}-r_e<d<\sqrt{r_{i}^2-r_e^2},$ we define the spherical cap of the typical receiver $u$ associated with the distance $d$ as follows: $$	S_{r_i}(d) = \{(x,y,z)\in S_{r_i}| \|(x,y,z)- \vec{u} \|\leq d \}.$$
When taking $d=\sqrt{r_{i}^2-r_e^2},$ the spherical cap $S_{r_i}(\sqrt{r_i^2-r_e^2})$ is the visible cap seen from the typical receiver, which is also denoted by $\bar{S}_{r_i}.$  Using it, we represent the total visible cap associated with $\Psi$ from the typical receiver as follow: $$\bar{S}=\cup_{i=1}^I\bar{S}_{r_i} = \cup_{i=1}^I S_{r_i}\left(\sqrt{r_{i}^2-r_e^2}\right). $$

For the $i$-th constellation of radius $r_i$, let $D(l_u,r_i)$ be the random variable of the distance from the typical receiver at the latitude $l_u $ to its nearest satellite on the orbit of radius $r_i$. We take $D(l_u,r_i)=\infty$ if $\Psi_i(\bar{S}_{r_i})=\emptyset,$ namely if there is no visible satellite of $\Psi_i$. Let $D(l_u)$ denotes the distance from the typical receiver to its nearest satellite of any orbit. Using the void probability, we derive the following theorem on the distance to the nearest satellite. In below, $\bP $ is w.r.t. the invariant measure ${\overline{Q}}_{\Psi}$.

\begin{theorem}\label{Theorem:2}
	The CCDF of $D(l_u)$, the distance from the typical receiver to its nearest satellite, is given by 
	\begin{equation}
		\bP(D({l_u})>d) = \int_0^{\frac{2\pi}{N_1}}\left(\prod_{i=1}^I \bP(D(l_u,r_i)>d|\theta)\right)\frac{ N_1}{2\pi}\diff \theta\nnb.
	\end{equation}
	Each term $\bP(D(l_u,r_i)>d|\theta)$ is given by the integral 
	\begin{align}
	\int_{0}^{\frac{2\pi}{M_i}}  \!\! \mathbbm{1} \left\{ \frac{r_{i}^2+r_e^2-d^2}{2 r_{i} r_e }> \max_{j,k} \frac{\langle\vec{X}_{i,j,k}, \vec{u}\rangle }{\|\vec{X}_{i,j,k}\|\|\vec{u}\|} \right\} \! \frac{\diff {\omega}}{2\pi/M_i},\nnb
	\end{align}
	where $\langle {\vec{X}_{i,j,k} \cdot \vec{u}} \rangle$ denotes the inner product of the two vectors and $\vec{X}_{i,j,k}$ is a vector defined using Eqs. \eqref{xyz}--\eqref{xyz3}. 
\end{theorem}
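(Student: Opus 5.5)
The CCDF will be computed directly under the invariant measure $\overline{\mathcal Q}$ of Theorem~\ref{Theorem:1:basic}, pushed forward to $\overline{\mathcal Q}_{\Psi}$ by $\Xi$. By the construction of Section~\ref{S:II-A}, the whole configuration $\Psi$ is a deterministic function of the vector $(\theta,\omega_1,\dots,\omega_I)$. Here $\theta$ determines all orbit longitudes through $\theta_{i,1}=f_i(\theta)$ and \eqref{eq:long2}. Each $\omega_i$ determines all satellite phases of $\Psi_i$ through \eqref{eq:3}. Under $\overline{\mathcal Q}$, these $I+1$ variables are independent, with $\theta\sim\mathrm{Uniform}(0,2\pi/N_1)$ and $\omega_i\sim\mathrm{Uniform}(0,2\pi/M_i)$.

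The first step is to write the nearest-distance event as an intersection over constellations:
$$\{D(l_u)>d\}=\bigcap_{i=1}^I\{D(l_u,r_i)>d\}=\bigcap_{i=1}^I\{\Psi_i(S_{r_i}(d))=0\}.$$
Here $D(l_u,r_i)=\infty$ when no satellite of $\Psi_i$ is visible, which is consistent with this definition. Conditioning on $\theta$, the event $\{\Psi_i(S_{r_i}(d))=0\}$ depends only on $\omega_i$ and $\theta$. The conditional independence of the $\omega_i$ therefore factorizes the conditional probability into $\prod_i \bP(D(l_u,r_i)>d\mid\theta)$. Averaging over $\theta$ with density $N_1/(2\pi)$ on $[0,2\pi/N_1)$ then gives the outer integral.

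The second step evaluates each factor as an integral of an indicator in $\omega$ against the uniform density $M_i/(2\pi)$. For this, the void event must be expressed in angular form. Every point of $\Psi_i$ lies on $S_{r_i}$, and the receiver satisfies $\|\vec u\|=r_e$. By the law of cosines,
$$\|\vec X_{i,j,k}-\vec u\|^2=r_i^2+r_e^2-2r_ir_e\cos\gamma_{i,j,k},$$
where $\gamma_{i,j,k}$ is the angle between $\vec X_{i,j,k}$ and $\vec u$. Hence $\|\vec X_{i,j,k}-\vec u\|>d$ is equivalent to
$$\cos\gamma_{i,j,k}=\frac{\langle\vec X_{i,j,k},\vec u\rangle}{\|\vec X_{i,j,k}\|\|\vec u\|}<\frac{r_i^2+r_e^2-d^2}{2r_ir_e}.$$
Requiring this for all $j,k$ is the same as requiring it for the maximum over $j,k$, which yields the stated indicator. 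Here $\vec X_{i,j,k}$ is evaluated at $\theta_{i,j}=f_i(\theta)+2\pi(j-1)/N_i$ and $\omega_{i,j,k}=\omega+2\pi(k-1)/M_i$, using \eqref{xyz}--\eqref{xyz3}.

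The main point needing care is the visibility convention. For $d$ in the range $r_i-r_e<d<\sqrt{r_i^2-r_e^2}$, any satellite within distance $d$ lies in the visible cap, so the cap and visibility constraints coincide. For larger $d$, the event must be read with the visibility restriction. I would state this explicitly, restricting the maximum to visible satellites or noting that the formula is intended for $d$ in the visible range. A further point is that the conditional factorization needs the $\omega_i$ to be independent of $\theta$ and of each other, which holds by the product form \eqref{eq:Q}. Measurability of the indicator is immediate from the continuity of $\Xi$ established in Theorem~\ref{T:rev1:conjugatcy}.
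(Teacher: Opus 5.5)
Your proposal is correct and follows essentially the same route as the paper. You write $D(l_u)$ as the minimum of the per-constellation distances, factorize the probability given $\theta$ using the independence of the $\omega_i$ under $\overline{\mathcal Q}$, turn each void event into the cosine/max condition via the law of cosines, and integrate against the uniform density of $\omega_i$. Your remark on visibility is a valid refinement that the paper leaves implicit: since the maximum runs over all $(j,k)$ rather than only visible satellites, the formula is exact only for $d<\sqrt{r_i^2-r_e^2}$, which is the range on which the paper defines $S_{r_i}(d)$.
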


\begin{IEEEproof}
	The distance to the nearest satellite is given by the minimum of $\{D(l_u,r_i), i=1,...,I\}$, namely 
	\begin{align}
		D(l_u) = \min\{D(l_u,r_{1}),...,D(l_u,r_I)\}.
	\end{align}
	Since, given ${{\theta}}$, the random variables  $\{D(l_u,r_i)\}_{i=1,...,I}$ are independent, we have
	 \begin{align}
	 	\bP(D(l_u)>d)&=\bE_{}[\bP(D(l_u)>d|{{\theta}})]\nnb\\
	 	&=\bE\left[\prod_{i=1}^I \bP(D(l_u,r_i)>d|{{\theta}})\right]\label{15}.
	 \end{align}
	 Now, conditonially on the random variables ${{\theta}}$ and ${{\omega}}_i$, the CCDF of the distance from typical receiver to the nearest satellite of $\Psi_i$ is given by
	\begin{align}
		\bP(D(l_u,r_i)>d|{{\theta}})
		&=\bE_{{{\omega}}_i}\left[\bE\left[\ind\left\{D(l_u,r_i)>d\right\}|{{\theta}},{{\omega}}_i\right]\right],\nnb
	\end{align}
	where $\ind\{X\}$ denotes the indicator function of the event $X$. 

Let $\vec{X}_{i,j,k} $ be the notation for the vector from the origin to the point $X_{i,j,k} $ whose Cartesian coordinates are given in Eqs. \eqref{xyz},\eqref{xyz2}, and \eqref{xyz3}, respectively.  

Then, $D(l_u,r_i)$ is greater than $d$ if and only if the central angle from the receiver to the rim of the spherical cap $S_{r_i}(d)$ is smaller than central angles from the typical receiver to all points of $\Psi_i.$ By using the Cosine' law, we derive the central angle from the receiver to the rim of the spherical cap $S_{r_i}(d)$ as follows: $\kappa_i(d)=\arccos\left((r_i^2+r_e^2-d^2)/(2r_ir_e)\right)$. Using this, we have
 \begin{align}
	&\ind\left\{D(l_u,r_i)>d\right\} \nnb\\
	&= \mathbbm{1}\left\{\!\kappa_{i}(d) <\min_{(j,k)}\left\{\!\arccos\left(\frac{\vec{X}_{i,j,k} \cdot \vec{u} }{\|\vec{X}_{i,j,k}\|\|\vec{u}\|}\right)\!\right\}\!\right\}\nnb\\
	&= \mathbbm{1}\left\{\!\frac{r_i^2+r_e^2-d^2}{2r_ir_e} >\max_{(j,k)}\left\{\!\left.\frac{\vec{X}_{i,j,k} \cdot \vec{u} }{\|\vec{X}_{i,j,k}\|\|\vec{u}\|}\right.\!\right\}\!\right\}\label{eq:17},
\end{align}
where, in the minima and maxima above, $(j,k)$ ranges over $j=1,\ldots,N_i$ and $k=1,\ldots,M_i$. 
	Here, we express the angle between $\vec{X}_{i,j,k}$ and $\vec{u}$ using the inner product  $\arccos\left((\vec{X}_{i,j,k}\cdot\vec{u})/(\|\vec{X}_{i,j,k}\|\|\vec{u}\|)\right)$. We also use the fact that $\cos(x)$ is decreasing on $[0,\pi]$. Note that the left hand side of Eq. \eqref{eq:17} is a measurable function of $r_i,$ ${{\theta}},$ and ${{\omega}}_{i}.$ Therefore, since  ${{\omega}}_{i}\sim\mathrm{Uniform}(0,2\pi/M_i),$ we have
	\begin{align*}
		&\bE_{\omega_i}[\ind\left\{D(l_u,r_i)>d\right\}|{{\theta}}]\nnb\\
		&=\int_{0}^{\frac{2\pi}{M_i}}  \!\! \mathbbm{1} \left\{ \frac{r_{i}^2+r_e^2-d^2}{2 r_{i} r_e }> \max_{(j,k)} \frac{\vec{X}_{i,j,k} \cdot \vec{u} }{\|\vec{X}_{i,j,k}\|\|\vec{u}\|} \right\} \! \frac{M_i}{2\pi}\diff \omega.
	\end{align*}
	Finally, we obtain the final result by plugging the last expression to Eq. \eqref{15}.
\end{IEEEproof}
The distance to the nearest satellite is obtained as a function of system variables such as $N_i$ the number of orbits, $r_i$ the radius of orbits, $M_i$ the number of satellites per orbit, $\phi_i$ inclination angle, $l_u$ receiver latitude, and so on. Using this formula, the behavior of the association distance can be analyzed systematically without system-level simulations. The impact of the orbit radius onto the association distance is examined explicitly, for instance, by taking the derivative of the formula w.r.t. the orbit radius. This result gives an access to the analysis of the association distances which determine the performance of downlink communications from satellites to receivers on Earth.

The derived association distance of the typical receiver at latitude $l_u$ statistically represents the association distance of any receiver at the same latitude $l_u$ since the network architecture is invariant under the Earth rotation \cite{11159552}. Therefore, it corresponds to the spatial average of the association distances collected over all receivers at latitude $l_u.$

It is important to point out that the proposed network model captures a periodically deployed network architecture, leading to structural observations that cannot be identified under nonperiodic models based on binomial or Cox point processes \cite{9079921,9177073,9218989,9497773,9678973}. In models based on binomial, Poisson, or Cox point processes, there is a nonzero probability that a typical receiver observes no satellite, regardless of the number of satellites or the receiver location, which does not align with observations in real-world satellite networks. In contrast, in the proposed model, the distance from a typical receiver to its nearest satellite is always upper bounded by a constant due to the periodic structure, provided that the number of satellites and the number of orbits are sufficiently large and that the receiver is located at a latitude with a visible orbit. This highlights that the proposed multi-altitude Walker point process captures key structural features of real-world satellite networks, providing a meaningful first-order approximation.

\begin{figure}
	\centering
	\includegraphics[width=1\linewidth]{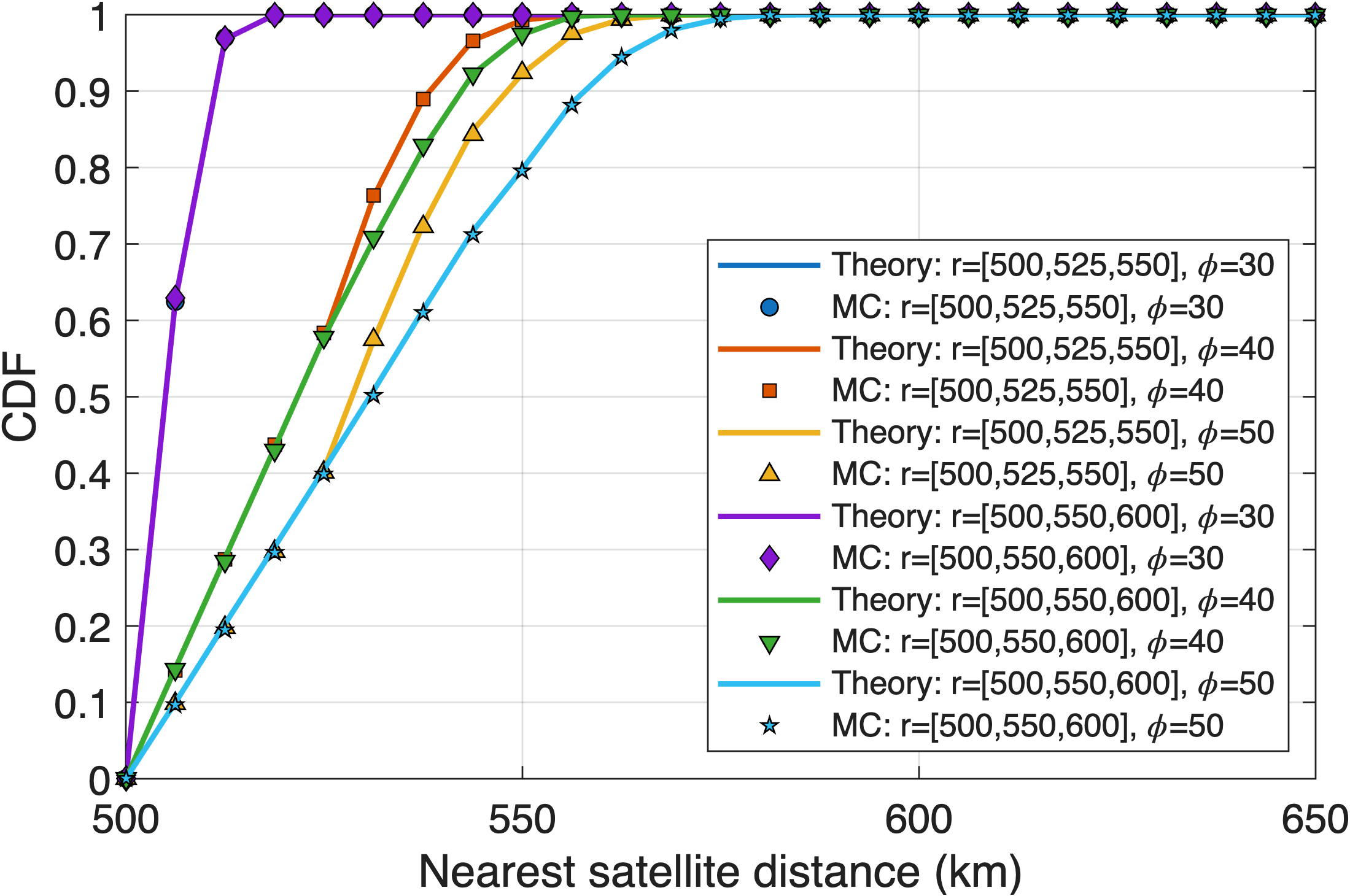}
	\caption{The association distance from typical receiver at latitude $l_u=30^{\circ}$. (Units: degree and km). }
	\label{fig:nearestdistance}
\end{figure}

\begin{figure}
	\centering
		\includegraphics[width=1\linewidth]{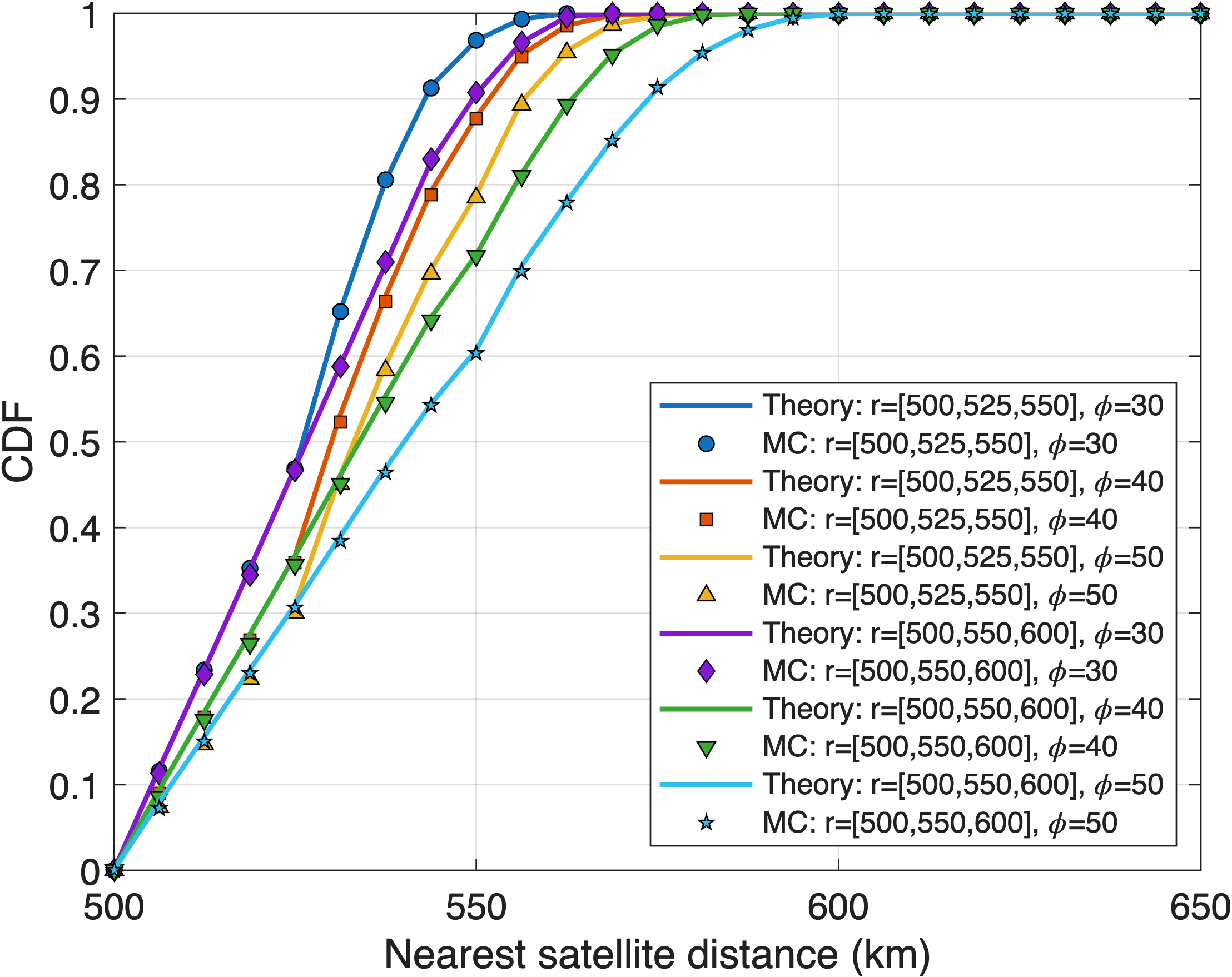}
	\caption{The association distance from typical receiver at latitude $l_u=0^{\circ}$. (Units: degree and km).}
	\label{fig:nearestdistance2}
\end{figure}

Fig. \ref{fig:nearestdistance} illustrates the CDF of the association distance when the typical receiver is at latitude $30^\circ$. We consider three SISA Walker point processes with $I=3$, where $N_1 = N_2 = N_3 = 25$ and $M_1 = M_2 = M_3 = 100$. The orbit inclinations $\phi$ or the satellite altitudes $\{r_1, r_2, r_3\}$ are varied to illustrate the large-scale behavior of the association distance w.r.t. the orbit inclination or satellite altitude. Note that, in Fig. \ref{fig:nearestdistance}, the orbit spacing between different SISA Walker point processes is maximized by setting $\theta_{2,1}=\theta_{1,1}+({2\pi}/{N_1})({1}/{3})$ and $\theta_{3,1}=\theta_{1,1}+({2\pi}/{N_1})({2}/{3})$. The figure shows that when the same numbers of orbits and satellites are used for all three SISA Walkers, a smaller inclination angle leads to a smaller association distance. This is expected because a larger inclination results in satellites being more sparsely distributed in the vertical direction, leading to an increase in the distance to the nearest satellite. Moreover, this experiment shows that, for any inclination angle, the scenario with a wide range of satellite altitudes---namely $r_1=6871$ km , $r_2=6921$ km, $r_3=6971$ km---statistically dominates the scenario with a narrow range of satellite altitudes, namely $r_1=6871$ km, $r_2=6896$ km, $r_3=6921$ km, meaning that the nearest distance under a wide range of altitudes tends to be larger than that under a narrow range. This occurs because a wider range of satellite altitudes results in satellites being more sparsely distributed near the typical receiver, leading to a larger association distance.

Fig. \ref{fig:nearestdistance2} illustrates the distance to the nearest satellite when the typical receiver is located on the equator. Similar to Fig. \ref{fig:nearestdistance}, we consider $I=3$ with $N_1 = N_2 = N_3 = 25$ and $M_1 = M_2 = M_3 = 100$. We also observe that both higher orbit inclinations and wider altitude ranges result in a larger association distance. By roughly comparing Figs. \ref{fig:nearestdistance} and \ref{fig:nearestdistance2}, the association distance generally tends to be larger when the typical receiver is located on the equator. This is mainly because the Walker model with inclined orbits is more densely populated around the inclination latitude than around the equator. In both figures, the derived formula closely matches the Monte Carlo (MC) simulation results, demonstrating the accuracy of the proposed analytical expression.

\begin{proposition}\label{P:1}
	With a slight abuse of notation, let $D_{t}(l_u)$ be the distance from a receiver at $\vec{u}$ to its nearest satellite at time $t\geq 0$, with some arbitrary initial condition. Then, under the ergodicity assumption, we have 
		\begin{align}
		&\lim_{T\to \infty} \frac 1 T \int_0^T \mathbbm{1}(D_{t}(l_u)>d)\  \mathrm{d} t \nnb\\
		&=
		\int \mathbbm{1}(D(l_u)>d) {\overline{\mathcal Q}}_{\Psi}(\mathrm{d} \psi)=\bP(D(l_u)>d),
	\end{align}
where ${\overline{\mathcal Q}}_{\Psi}$ is the unique invariant measure for the proposed satellite dynamic $\{\Psi_t\}$ on $\mathbb{W}.$
\end{proposition}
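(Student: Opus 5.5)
The plan is to reduce the statement to the unique ergodicity in Theorem~\ref{Theorem:speedratio}, transported to $\{\Psi_t\}$ by Corollary~\ref{T:rev2:conjugatcy}. The one subtlety is that $\mathbbm{1}(D_t(l_u)>d)$ is not a continuous function of the state. If we only wanted the identity for $\overline{\mathcal Q}$-almost every initial condition, Birkhoff's theorem applied to the measurable indicator would suffice. Since the statement allows an arbitrary initial condition, I will use a sandwich argument with continuous functions instead.

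\textbf{Step 1: write the quantity as a function of the torus state.} Write $D_t(l_u)=g(R_t(x))$, where for $s\in\mathbb S$
$$g(s)=\min_{i,j,k}\|\vec X_{i,j,k}(s)-\vec u\|,$$
with $\vec X_{i,j,k}(s)$ given by Eqs.~\eqref{xyz}--\eqref{xyz3}. Each satellite position depends continuously on $(\theta,\omega_i)$, and through the maps $f_i$ on $\theta$. I will assume the $f_i$ are continuous, which holds for the affine shifts used in the paper.

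\textbf{Step 2: handle the visibility convention.} For $d$ below the visible-cap radius, $D(l_u)>d$ holds if and only if no satellite lies in the cap $S_{r_i}(d)$, which is exactly $g>d$. For larger $d$ the statement is trivial or reduces to the same form. In either case $g$ is a finite minimum of continuous functions, hence continuous on the compact torus $\mathbb S$.

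\textbf{Step 3: sandwich the indicator.} For $\epsilon>0$, let $\ell_\epsilon$ and $u_\epsilon$ be continuous piecewise-linear functions of $g$ with
$$\mathbbm{1}(g>d+\epsilon)\le \ell_\epsilon(g)\le \mathbbm{1}(g>d)\le u_\epsilon(g)\le \mathbbm{1}(g>d-\epsilon).$$
The compositions $\ell_\epsilon\circ g$ and $u_\epsilon\circ g$ are continuous on $\mathbb S$, so unique ergodicity (equivalently Eq.~\eqref{eq:ergodic} via the isomorphism $\Xi$) applies to them for every initial condition. The liminf and limsup of the time average of the indicator are therefore squeezed between $\overline{\mathcal Q}(g>d+\epsilon)$ and $\overline{\mathcal Q}(g>d-\epsilon)$.

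\textbf{Step 4: let $\epsilon\to 0$.} By monotone convergence, the two bounds converge to $\overline{\mathcal Q}(g>d)$ and $\overline{\mathcal Q}(g\ge d)$ respectively. It remains to show that the boundary set $\{g=d\}$ has Haar measure zero.

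This last point is the main obstacle. There are two ways to settle it:
\begin{itemize}
\item The distribution function $d\mapsto \overline{\mathcal Q}(g\le d)$ is monotone, so it has at most countably many atoms. The identity then holds for all but countably many $d$, and for every $d$ after passing to right limits.
\item More directly, $\{g=d\}\subset\bigcup_{i,j,k}\{s:\|\vec X_{i,j,k}(s)-\vec u\|=d\}$. Each set in this union is a level set of a real-analytic, non-constant function of $\omega_i$ (for fixed $\theta$, the satellite sweeps a circle of the orbit, so its distance to $\vec u$ is not constant). Each such level set is therefore finite in $\omega_i$ for each $\theta$, hence null by Fubini.
\end{itemize}
I would present the second argument. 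With it, the limit equals $\overline{\mathcal Q}(g>d)=\bP(D(l_u)>d)$, where the last equality is the pushforward identity defining $\overline{\mathcal Q}_\Psi$ and matches Theorem~\ref{Theorem:2}.
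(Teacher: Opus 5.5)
Your argument follows the paper's route. The paper's entire proof is the remark that the claim follows from Corollary~\ref{T:rev2:conjugatcy}, i.e.\ from unique ergodicity of the linear flow on $\mathbb S$ transported through $\Xi$. What you add is the step the paper skips. Unique ergodicity gives convergence of time averages for \emph{every} initial condition only for continuous observables, and $\mathbbm{1}(D_t(l_u)>d)$ is not continuous. So the citation alone, via Birkhoff, only delivers the identity for $\overline{\mathcal Q}$-almost every initial condition, whereas the statement claims it for an arbitrary one.

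Your sandwich between continuous functions of $g$, combined with $\overline{\mathcal Q}(g=d)=0$, closes this gap. It is the standard argument that time averages converge everywhere for observables whose discontinuity set is null under the unique invariant measure. Your proof is therefore the more complete of the two. Relying on the second bullet is also the right choice: the first only gives the identity off a countable set of $d$, and right limits in $d$ do not recover it at an atom.

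Two small repairs are needed.

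\textbf{Constant distances.} The distance from $\vec X_{i,j,k}$ to $\vec u$ \emph{can} be constant in $\omega_i$. When $\vec u$ lies on the normal axis of the orbital plane (e.g.\ $l_u=\pm(\pi/2-\phi_i)$ at particular orbit longitudes), it equals $\sqrt{r_i^2+r_e^2}$ identically. This is harmless: every threshold you need is at most $\sqrt{r_i^2-r_e^2}$, so that level set is empty.

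\textbf{Several radii.} For $d$ between the smallest and largest visible-cap radii, the event is not $\{g>d\}$ but $\bigcap_i\{g_i>d_i\}$, with $g_i=\min_{j,k}\|\vec X_{i,j,k}-\vec u\|$ and $d_i=\min(d,\sqrt{r_i^2-r_e^2})$. Run the same sandwich on the continuous function $\min_i(g_i-d_i)$ at level $0$. Its zero set is contained in $\bigcup_{i,j,k}\{\|\vec X_{i,j,k}-\vec u\|=d_i\}$, which is null by your Fubini argument.
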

	\begin{IEEEproof}
	The proof follows from Theorem  \ref{T:rev2:conjugatcy}.
\end{IEEEproof}

To accurately assess the time fraction during which the nearest satellite is at a distance greater than $d$ in practice, system-level simulations must be conducted over a sufficiently long time horizon. In particular, an entire network geometry---including the satellites’ initial locations and their subsequent motion---must be incorporated to obtain reliable estimates of this time fraction. The ergodic interpretation shows that this time fraction can instead be estimated using the closed-form expression provided in Theorem \ref{Theorem:2}, thereby avoiding the need for long-duration system-level simulations.

We now assess how the proposed network geometry impacts the communication performance in practical scenarios by analyzing the coverage probability of the typical receiver. 

\subsection{SINR Coverage Probability}
Let $\bP$ be the probability w.r.t. $\overline{Q}_{\Psi}\times\overline{Q}_{H} $ the product of the invariant measure of the satellite point process and the fading process. The SINR coverage probability of the typical receiver is defined as follow:
\begin{align}
	&\bP(\SINR(l_u)>\tau)
	=\bP\left(\frac{pG_{\star}H\|\vec{X}_\star-\vec{u}\|^{-\alpha}}{\sigma^2+J}>\tau\right) \label{eq:sinrdef},
\end{align}
where $\SINR(l_u)$ is the SINR of the typical receiver at latitude $l_u$ (rad), $\sigma^2$ is the thermal noise power, and $J$ is the interference power seen by the typical receiver defined as follows:
\begin{equation}
	J=\sum_{X_{i,j,k}\in\Psi(\bar{S})\setminus X_\star} \frac{p G_{i,j,k}H_{i,j,k}}{ \|\vec{X}_{i,j,k}-\vec{u}\|^{\alpha}}.\label{eq:intdef}
\end{equation}
In Eqs. \eqref{eq:sinrdef} and \eqref{eq:intdef}, {the $H_{i,j,k}$'s are i.i.d. random variables representing small-scale signal fluctuations independent of everything else}, $X_\star$ is the location of the nearest satellite from the typical receiver, and $G_{i,j,k}$ is the antenna gain between the satellite transmitter $X_{i,j,k}$ and the typical receiver, which is given by $G_{i,j,k} = G(\eta(\|X_{i,j,k}-\vec{u}\|))$. We have 
\begin{align}
	X_\star &= \argmin_{i,j,k\in\Psi(\bar{S})}\left(r_i^2+r_e^2 - 2 (\vec{X}_{i,j,k}\cdot \vec{u})\right),\label{eq:X_star}
\end{align}
where $X_{i,j,k}$ is the location of the $k$-th satellite on the $j$-th orbit of the $i$-th SISA Walker point process. 

Let us denote the set of satellites that are visible to the typical user $\vec{u} $ by  
\begin{align}
	\mathcal{J} &= \left\{ X_{i,j,k}\in\Psi \setminus X_{\star} \left| r_e^2 \leq \left.(\langle \vec{X}_{i,j,k}, \vec{u} \rangle)\right. \right.\right\}.\label{eq:J}
\end{align}
\begin{theorem}\label{Theorem:Coverage}

	When the fading distribution $H$ follows a Gamma random variable with shape parameter $\nu$ and scale parameter $1/\lambda,$ the SINR coverage probability of the typical receiver is given by Eq. \eqref{eq:Theoremcov}. For the special case where $\nu=\lambda=1$, $H$ is an exponential random variable with mean $1$ and the SINR coverage probability of the typical receiver is given by Eq. \eqref{eq:Theoremcov2}. In both equations,  $X_\star$  and $a_{i,j,k}$ are given by Eqs. \eqref{eq:a_ijk} and \eqref{ee:Xstar}, respectively. Note that $X_\star$  and $a_{i,j,k}$ are deterministic functions of $\{\theta,\omega\}$.

\end{theorem}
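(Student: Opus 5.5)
The argument conditions on the state of the torus and then averages over the invariant measure. Under $\overline{\mathcal Q}_{\Psi}\times\overline{\mathcal Q}_H$, the point process $\Psi$ is $\Xi(\theta,\omega_1,\dots,\omega_I)$ with $(\theta,\omega_1,\dots,\omega_I)\sim\overline{\mathcal Q}$. So I would first fix a realization $(\theta,\{\omega_i\})$. Given it, the whole configuration is deterministic by Eqs. \eqref{xyz}--\eqref{xyz3}. This includes the serving satellite $X_\star$ from Eq. \eqref{eq:X_star}, its distance $\|\vec X_\star-\vec u\|$, and the interferer set $\mathcal J$ from Eq. \eqref{eq:J} with distances $\|\vec X_{i,j,k}-\vec u\|$ and gains $G_{i,j,k}$. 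The only randomness left is the i.i.d. fading. I would then write
\[
\bP(\SINR(l_u)>\tau)=\int_0^{\frac{2\pi}{N_1}}\prod_{i=1}^I\int_0^{\frac{2\pi}{M_i}}\bP(\SINR(l_u)>\tau\mid\theta,\{\omega_i\})\,\frac{M_i\,\diff\omega_i}{2\pi}\,\frac{N_1\,\diff\theta}{2\pi},
\]
in the same way as in the proof of Theorem \ref{Theorem:2}.

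For the conditional probability, set $s=\tau\|\vec X_\star-\vec u\|^{\alpha}/(pg_M)$, using the assumption that the serving gain is $g_M$. The event becomes $\{H>s(\sigma^2+J)\}$. With $H\sim$ Gamma$(\nu,1/\lambda)$ and integer $\nu$, the CCDF is $\bar F_H(x)=e^{-\lambda x}\sum_{n=0}^{\nu-1}(\lambda x)^n/n!$. Conditioning on $J$ and taking the expectation gives
\[
\sum_{n=0}^{\nu-1}\frac{(-\lambda s)^n}{n!}\,\frac{\partial^n}{\partial \lambda s^n}\,\mathcal L_{\sigma^2+J}(\lambda s),
\]
that is, derivatives of the Laplace transform of noise plus interference. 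Given the configuration, $J$ is a finite sum of independent terms. Its Laplace transform therefore factorizes as $\prod_{X_{i,j,k}\in\mathcal J}\left(1+a_{i,j,k}u/\lambda\right)^{-\nu}$, where I would define $a_{i,j,k}=pG_{i,j,k}\|\vec X_{i,j,k}-\vec u\|^{-\alpha}$. After substituting $u=\lambda s$, this is $\prod(1+a_{i,j,k}s)^{-\nu}$, and the noise contributes the factor $e^{-\sigma^2\lambda s}$. The $n$-th derivative of this product can be written explicitly with the general Leibniz rule. That produces Eq. \eqref{eq:Theoremcov}, with $a_{i,j,k}$ and $X_\star$ as deterministic functions of $(\theta,\omega)$ from Eqs. \eqref{eq:a_ijk} and \eqref{ee:Xstar}.

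For the exponential case $\nu=\lambda=1$, only the $n=0$ term remains. The conditional coverage is then $e^{-\sigma^2 s}\prod_{X_{i,j,k}\in\mathcal J}(1+a_{i,j,k}s)^{-1}$, and integrating this over the torus gives Eq. \eqref{eq:Theoremcov2}.

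I expect the main obstacle to be bookkeeping rather than anything conceptual. The set $\mathcal J$ must be expressed through indicator functions of $(\theta,\omega)$, namely visibility via $\langle\vec X_{i,j,k},\vec u\rangle\ge r_e^2$ and exclusion of the argmin $X_\star$. The higher-order derivatives for general $\nu$ must also be written in closed form; I would use Faà di Bruno or the Leibniz expansion over the finitely many factors. Finally, the conditional independence of the fading from the geometry has to be justified. This follows from the product measure $\overline{\mathcal Q}_\Psi\times\overline{\mathcal Q}_H$ and Fubini, since all integrands are bounded and measurable.
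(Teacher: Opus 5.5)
Your conditioning on $(\theta,\omega_1,\dots,\omega_I)$, your treatment of $X_\star$, $\mathcal J$ and $a_{i,j,k}$ as deterministic functions of the torus state, and your exponential case all match what the paper does. Your derivation of Eq.~\eqref{eq:Theoremcov2} is correct.

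\textbf{The gap is in the Gamma case.} The theorem is stated for a general shape parameter $\nu>0$; the paper defines $\Gamma(\nu)$ by the integral, and Nakagami-type fading routinely has non-integer $\nu$. Your key step is the Erlang form $\bar F_H(x)=e^{-\lambda x}\sum_{n=0}^{\nu-1}(\lambda x)^n/n!$, and it is valid only for integer $\nu$. For non-integer $\nu$ the CCDF has no such finite expansion, so your representation through derivatives of $\cL_{\sigma^2+J}$ breaks down. Even for integer $\nu$, what you obtain is a Leibniz or Fa\`a di Bruno expansion in derivatives of $e^{-\sigma^2 u}\prod_{\mathcal J}(1+a_{i,j,k}u/\lambda)^{-\nu}$. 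That is not the expression in Eq.~\eqref{eq:Theoremcov}, which contains the upper incomplete Gamma function and an unevaluated $|\mathcal J|$-fold expectation over the interferers' fading. So your claim that the expansion ``produces'' Eq.~\eqref{eq:Theoremcov} is not accurate as stated.

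\textbf{The fix is to stop one step earlier.} After conditioning, the coverage probability equals $\bE[\bar F_H(s(\sigma^2+J))\mid\theta,\omega]$ with $s=\tau\|\vec X_\star-\vec u\|^{\alpha}/(pg_M)$. For every $\nu>0$ one has $\bar F_H(x)=\Gamma(\nu,\lambda x)/\Gamma(\nu)$. Substituting $J=\sum_{\mathcal J}a_{i,j,k}H_{i,j,k}$ and integrating over $\overline{\mathcal Q}$ gives exactly Eq.~\eqref{eq:Theoremcov}; this is the paper's route, and it needs no integrality assumption. For integer $\nu$ your expansion agrees with it, via $\Gamma(\nu,x)/\Gamma(\nu)=e^{-x}\sum_{n<\nu}x^n/n!$. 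It is then a legitimate refinement, not a proof of the statement. It replaces the $|\mathcal J|$-fold fading expectation by finitely many derivatives of an explicit product, extending the Laplace-transform trick the paper uses only for $\nu=1$. That is computationally nicer, but it covers a strictly smaller class of fading laws than the theorem asserts.
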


\begin{figure*}
	\begin{align}
		\bP(\SINR(l_u)>\tau)=&\frac{1}{\Gamma(\nu)}\frac{1}{\frac{2\pi}{N_1}\prod_{i=1}^I(\frac{2\pi}{M_i})}\!\!\int_{0}^{\frac{2\pi}{N_1}}\!\!\!\int_{0}^{\frac{2\pi}{M_1}}\!\!\!\!\cdots\int_{0}^{\frac{2\pi}{M_I}}\!\underbrace{\bE_H\!\cdots\!\bE_H}_{|\mathcal{J}|}\!\left[\Gamma\!\left(\nu,\frac{\lambda\tau(\sigma^2+\sum_{(i,j,k)\in\mathcal J}a_{i,j,k}H_{i,j,k})}{pg_M\|\vec{X}_\star-\vec{u}\|^{-\alpha}}\right)\!\right]\! \underbrace{\diff \omega_{1} \cdots  \diff \omega_{I}}_{I}\diff \theta . \label{eq:Theoremcov}\\
		\bP(\SINR(l_u)>\tau)=&\frac{1}{\frac{2\pi}{N_1}\prod_{i=1}^I(\frac{2\pi}{M_i})}\int_{0}^{\frac{2\pi}{N_1}}\int_{0}^{\frac{2\pi}{M_1}}\cdots\int_{0}^{\frac{2\pi}{M_I}}\left.e^{\frac{- \tau \sigma^2\|X_\star-\vec{u}\|^{\alpha} }{pg_M}}\prod_\mathcal{J} \left(\frac{1}{1+\frac{\tau a_{i,j,k}\|\vec{X}_\star-\vec{u}\|^{\alpha}}{pg_M}}\right)\right. \underbrace{\diff \omega_{1} \cdots  \diff \omega_{I}}_{I}\diff \theta . \label{eq:Theoremcov2}
	\end{align}
	\hrule
\end{figure*}

\begin{IEEEproof}
	Note that $\SINR(l_u)$ is a random variable for the SINR of the typical receiver at latitude $l_u$. For simplicity, we use a shorter notation: $\{{{\theta}},{{\omega}}\}\equiv \{{{\theta}},{{\omega}}_{1},...,{{\omega}}_I\}. $ Then, conditionally on $\{{{\theta}},{{\omega}}\}$, the SINR coverage probability is given by
	\begin{align}
		\bP(\SINR(l_u)>\tau)&=\bE\left[\bP\left(\SINR(l_u)>\tau|\{{{\theta}},{{\omega}}\}\right)\right],\label{SINR_derv1}
	\end{align}
	where the conditional probability inside the expectation is 
	\begin{align}
		&\bP(\SINR(l_u)>\tau|\{{{\theta}},{{\omega}}\})\nnb\\
		&=\bP\left(\left.\frac{pG_\star H\|\vec{X}_\star-\vec{u}\|^{-\alpha}}{\sigma^2+J}>\tau\right|\{{{\theta}},{{\omega}}\}\right)\nnb\\
		&=\bP\left(\left.H> \frac{\tau(\sigma^2+J)\|\vec{X}_\star-\vec{u}\|^\alpha}{p G_{\star} }\right|\{{{\theta}},{{\omega}}\}\right)\nnb\\
		&=\bE\left[\bar{F}_{H}\left(\left.\tau(\sigma^2+J)\|\vec{X}_\star-\vec{u}\|^{\alpha}/(pG_\star)\right)\right|\{{{\theta}},{{\omega}}\}\right]\label{pcov_1}.
	\end{align}

	Furthermore, since the satellite point process $\Psi$ is a function of $\{{{\theta}},{{\omega}}\}$, conditionally on the random variables $\{{{\theta}}, {{\omega}}\},$ the only randomness in the last conditional expression is w.r.t. the $H$'s.  Since the typical receiver is associated with the nearest satellite, we have 
	\begin{align}
		X_\star  &= \argmin_{i,j,k\in\Psi(\bar{S})}\left(\|\vec{X}_{i,j,k}-\vec{u}\|^2\right)\nnb\\
		&=\argmin_{i,j,k\in\Psi(\bar{S})}\left(r_i^2+r_e^2 - 2 \langle \vec{X}_{i,j,k}, \vec{u} \rangle\right) \label{ee:Xstar}
	\end{align}

On the other hand, the interference power at the typical receiver is obtained by combining the received signal powers from all visible satellite transmitters, except the nearest one. Hence, %we have
\begin{align}
	J&=\sum_{\mathcal{J}} \frac{p G_{i,j,k}H_{i,j,k}}{\|\vec{X}_{i,j,k}-\vec{u}\|^{\alpha}}=\sum_{\mathcal{J}}a_{i,j,k}H_{i,j,k},
\end{align}
where 
\begin{equation}
	a_{i,j,k}=pG(\eta(\|\vec{X}_{i,j,k}-\vec{u}\|))/\|\vec{X}_{i,j,k}-\vec{u}\|^{\alpha}. \label{eq:a_ijk}
\end{equation}

To simplify, we use the projection of vector $\vec{X}_{i,j,k}$ onto vector $\vec{u}$ to obtain the following expression:
\begin{align}
	\mathcal{J}&=\{X_{i,j,k}\in \Psi(\bar{S})\} \setminus X_{\star}\nnb\\
	&\equiv \left\{ X_{i,j,k}\in\Psi \left| r_e^2 \leq \left.\langle \vec{X}_{i,j,k}, \vec{u} \rangle\right. \right.\right\}\setminus X_{\star}.\nnb
\end{align}

Now, let us assume that $H$ is a Gamma random variable with shape parameter ${\nu}$ and scale parameter ${1/\lambda},$ denoted by $H\sim\mathrm{Gamma}(\nu,1/\lambda)$. Let $\Gamma(\nu)$ be the ordinary Gamma function defined by $\int_0^\infty t^{\nu-1}e^{-t}\diff t. $ Then, the PDF and CDF of $H$ are
\begin{align*}
	f_H(x)= \frac{\lambda^{\nu}x^{\nu-1}e^{-\lambda x}}{\Gamma(\nu)}, \quad {F}_H(x)=\frac{1}{\Gamma(\nu)}\gamma(\nu,\lambda x),  % \int_x^\infty f_H(u)\diff u = \gamma(x,n)\nnb,
 \end{align*}
respectively, where $\gamma(\nu,\lambda x)=\int_0^{\lambda x} t^{\nu-1}e^{-t}\diff t  $ is the lower incomplete Gamma function. We also have 
\begin{equation}
	\bar{F}_H(x)=\frac{1}{\Gamma(\nu)}\Gamma(\nu,\lambda x), \label{eq:bar_F_H}
\end{equation}
where $\Gamma(\nu,\lambda x) =  \int_{\lambda x}^\infty t^{\nu-1}e^{-t}\diff t  $ is the upper incomplete Gamma function. By using the CCDF in Eq. \eqref{eq:bar_F_H}, the coverage probability is given by
\begin{align}
	&\bP(\SINR(l_u)>\tau|\{{{\theta}},{{\omega}}\}) \nnb\\
	&=\frac{1}{\Gamma(\nu)}\underbrace{\bE_{H}\!\cdots\!\bE_{H}}_{|\mathcal{J}|}\left[\left.\Gamma\!\left(\!\nu,\frac{\lambda\tau(\sigma^2+J)\|\vec{X}_\star-\vec{u}\|^{\alpha}}{pg_M}\right)\!\right|\{{{\theta}},{{\omega}}\}\right]\nnb\\
	&=\frac{1}{\Gamma(\nu)}\underbrace{\bE_{H}\!\cdots\!\bE_{H}}_{|\mathcal{J}|}\!\left[\Gamma\!\left(\nu,\frac{\lambda\tau(\sigma^2+\sum\limits_{(i,j,k)\in\mathcal J} \!\!\!a_{i,j,k}H_{i,j,k})}{pg_M\|\vec{X}_\star-\vec{u}\|^{-\alpha}}\right)\right].\nnb
\end{align}
 Finally, by integrating the last expression w.r.t. $\{{{\theta}},{{\omega}}\}$, or equivalently $\overline{\mathcal{Q}}$, we obtain the final expression for the coverage probability of the typical receiver.

Now, let us consider the special case of the Gamma-distributed fading when $\nu=\lambda=1$. For this case, this accounts for the Rayleigh fading scenario and $H$ follows an exponential random variable with mean $1$. The PDF and CCDF of $H$ are given by $f_H(x)=e^{-x}, \ \bar{F}_H(x) = e^{-x},$ respectively.
To obtain the coverage probability of the typical receiver, we first derive the Laplace transform of the interference $J$ conditionally on $\{{{\theta}},{{\omega}}\}$ as follows:
\begin{align}
	\cL_J(s)&=\bE\left[\left.e^{-Js}\right|\{\theta,\omega\} \right]\nnb\\
	&=\bE\left[\left.\prod_{\mathcal J}e^{-spG_{}H\|\vec{X}-\vec{u}\|^{-\alpha}}\right|\{\theta,\omega\}\right]\nnb\\
	&=\left.\prod_\mathcal{J} \cL_H\left(\frac{spG_{i,j,k}}{\|\vec{X}_{i,j,k}-\vec{u}\|^{\alpha}}\right)\right.=\left.\prod_\mathcal{J} \frac{1}{1+sa_{i,j,k}}\right.\label{eq:Lplace}.
\end{align}
Let us denote by $\cL_H(s)$ the Laplace transform of $H$ evaluated at $s$. Using the last CCDF, Eq. \eqref{SINR_derv1} is given by 
\begin{align}
	&\bP(\SINR(l_u)>\tau|\{{{\theta}},{{\omega}}\})\nnb\\
	&= \bE\left[\left.\exp\left(-\frac{\tau(\sigma^2+J)\|\vec{X}_\star-\vec{u}\|^{\alpha}}{pg_M}\right)\right| \{\theta,\omega\}\right]\nnb\\
&= e^{\frac{- \tau \sigma^2\|X_\star-\vec{u}\|^{\alpha} }{pg_M}}\cL_J\left(\frac{\tau\|\vec{X}_\star-\vec{u}\|^{\alpha}}{pg_M}\right)\nnb\\
&=e^{\frac{- \tau \sigma^2\|X_\star-\vec{u}\|^{\alpha} }{pg_M}}\left.\prod_\mathcal{J} \left(\frac{1}{1+\frac{\tau a_{i,j,k} \|\vec{X}_\star-\vec{u}\|^{\alpha}}{pg_{M}}}\right)\right.\label{pcov-3},
\end{align}
where $\cL_J(x)$ denotes the Laplace transform of the interference $J$ evaluated at $x$, which is given by Eq. \eqref{eq:Lplace}. Finally, by integrating the above expression w.r.t. the unique invariant distribution $\overline{\mathcal{Q}}$ on $\mathbb S$, we obtain the final expression of the coverage probability of the typical receiver at latitude $l_u$ when $H$ is an exponential random variable with mean $1.$
\end{IEEEproof}

\begin{figure}
	\centering
	\includegraphics[width=1\linewidth]{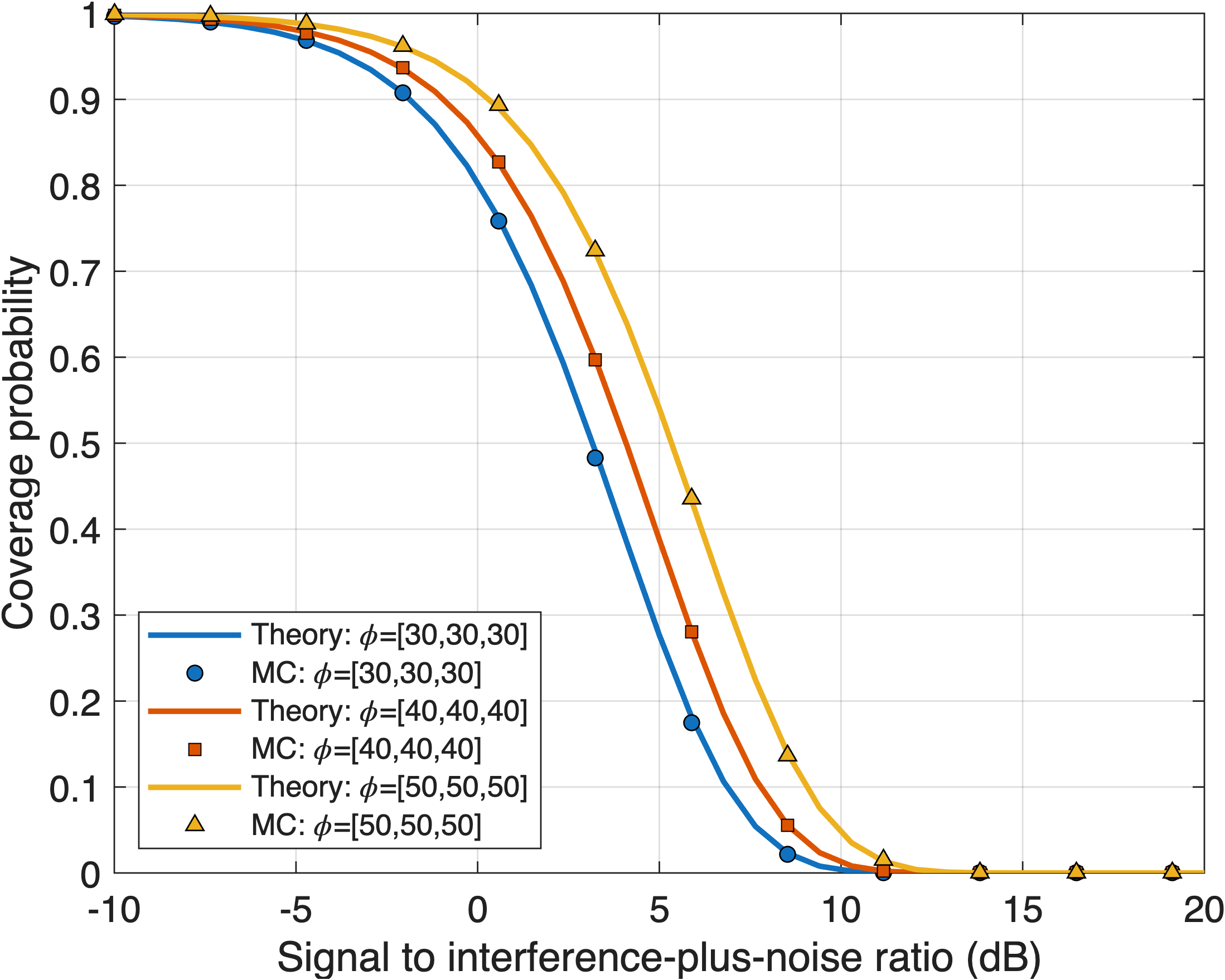}
	\caption{The SIR coverage probability of the typical receiver at $l_u=30^{\circ}$.}
	\label{fig:sinrvalidationfig1}
\end{figure}
\begin{figure}
	\centering
	\includegraphics[width=1\linewidth]{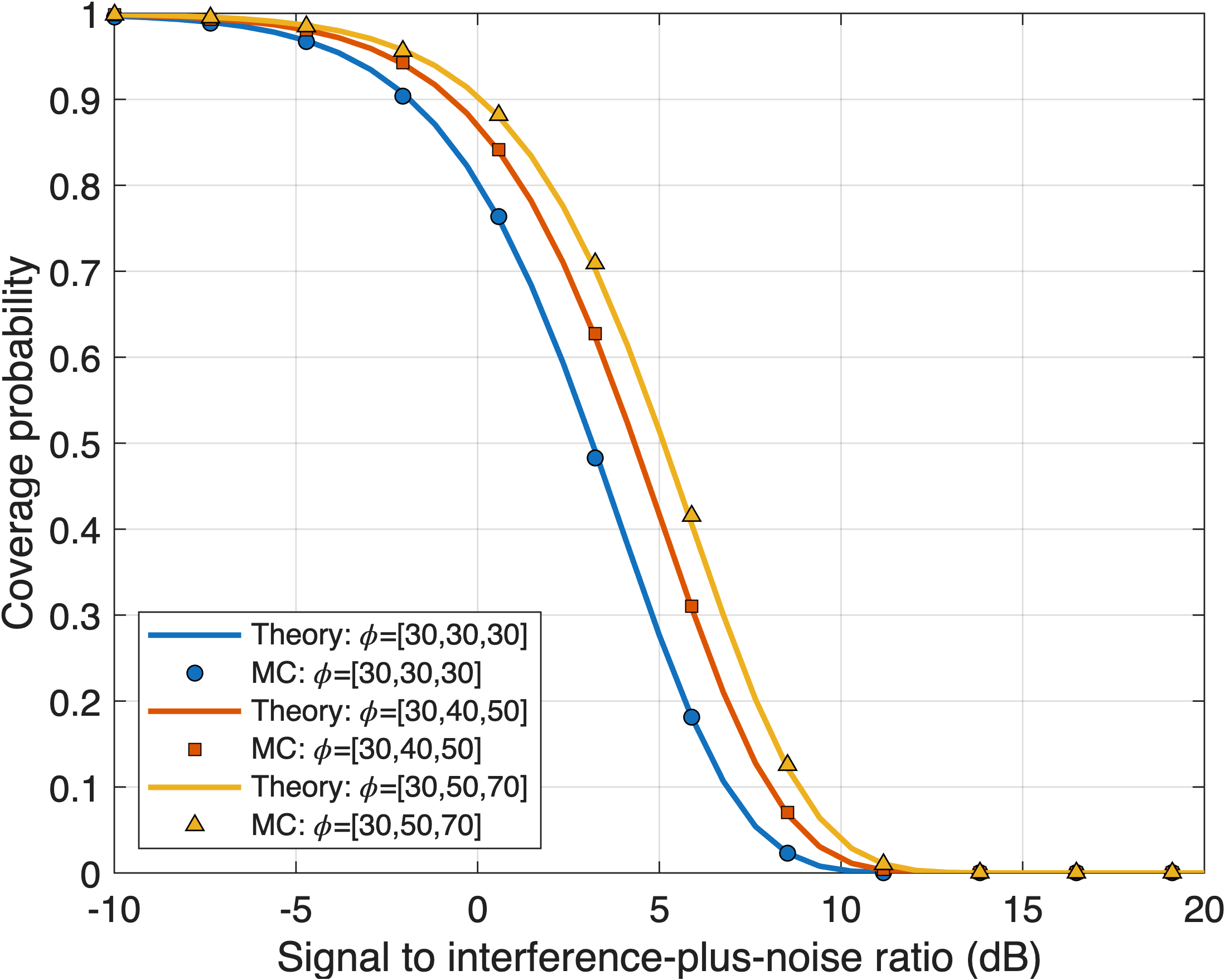}
	\caption{The SINR coverage probability of the typical receiver at $l_u=30^{\circ}$.}
	\label{fig:sinrvalidationfig2}
\end{figure}

\begin{figure}
	\centering
	\includegraphics[width=1\linewidth]{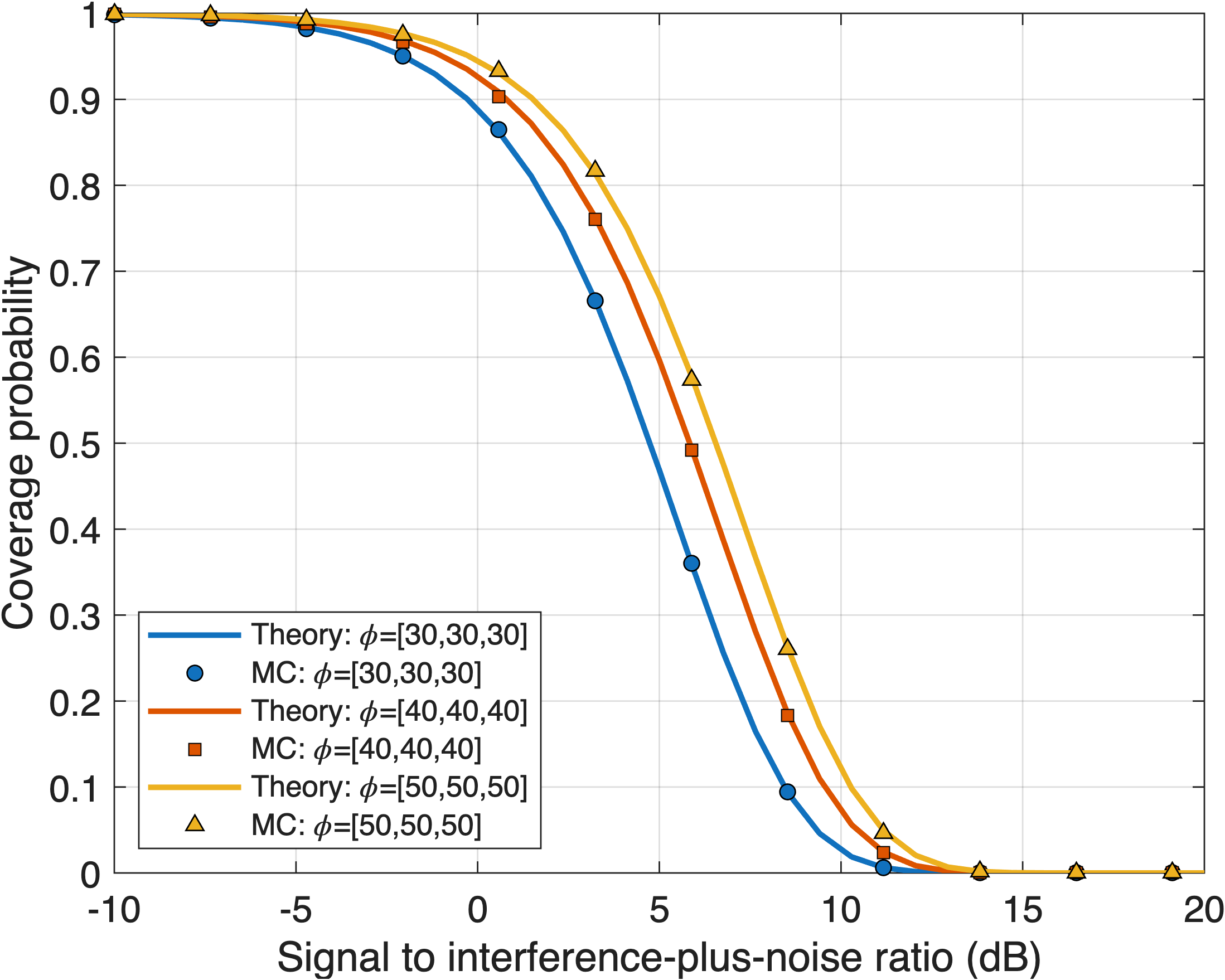}
	\caption{The SINR coverage probability of the typical receiver at $l_u=0^{\circ}$.}
	\label{fig:sinrvalidationfig3}
\end{figure}

\begin{figure}
	\centering
	\includegraphics[width=1\linewidth]{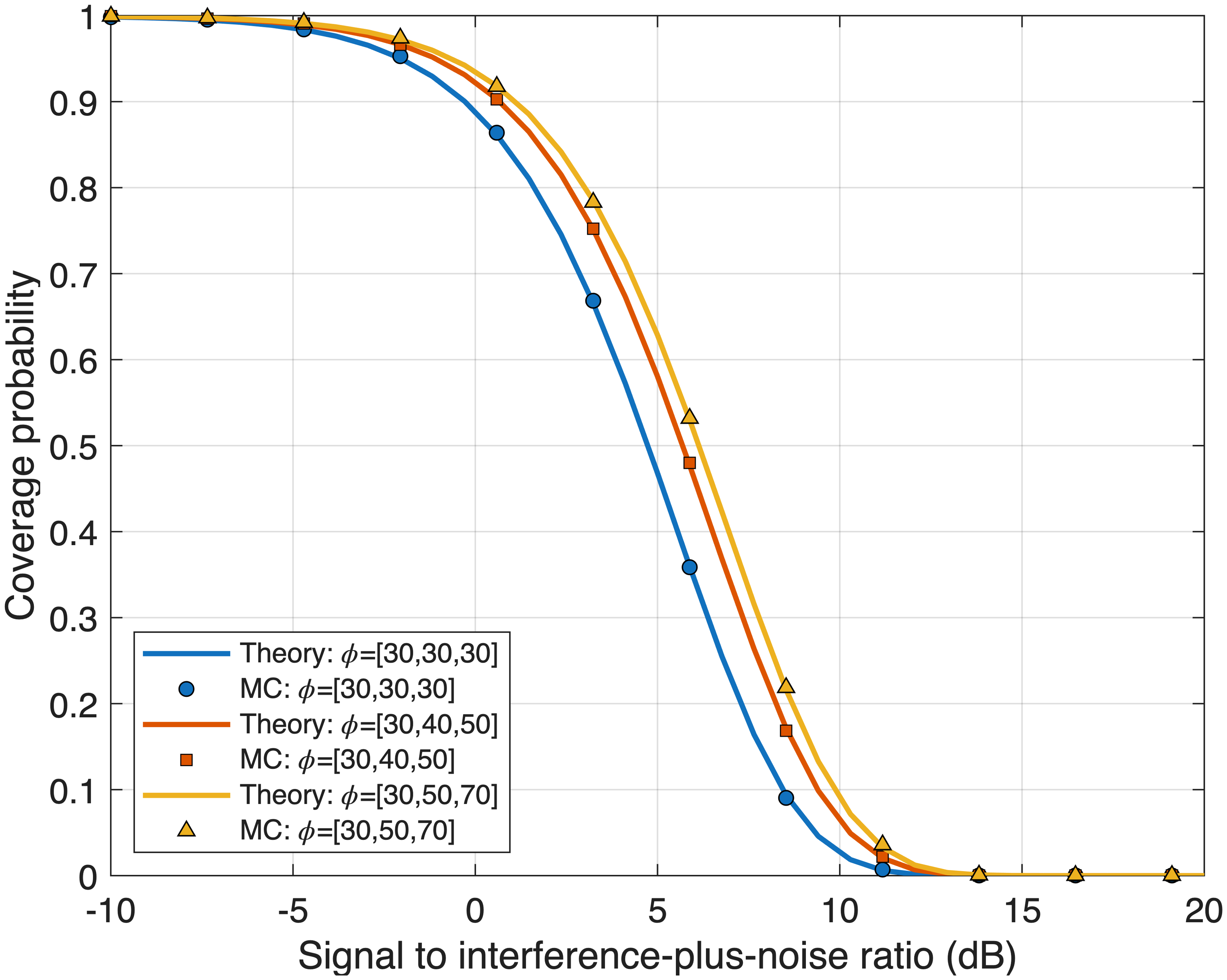}
	\caption{The SINR coverage probability of the typical receiver at $l_u=0^{\circ}$.}
	\label{fig:sinrvalidationfig4}
\end{figure}

\begin{table}
	\caption{System parameters}\label{Table:1}
	\centering
	\begin{tabular}{|c|c|}
		\hline
		Parameters & Values \\
		\hline
		1-m reference power 	$p$ (Watt) & $100 $ dBW  \\
		\hline
		Bandwidth	&  $10$ MHz \\
		\hline
		Noise figure $N_f$	&   $7$ dB\\
		\hline
		Noise power $\sigma^2$ (Watt) & $-127$ dBW\\
		\hline
		Temperature $T $	& $300$ K \\
		\hline
		Max transmit antenna $g_M$	& $26$ dBi \\
		\hline
		Transmit antenna gain $g_m$	&  $10 $ dBi \\
		\hline
		Cut-off angle $\eta_c $ & $\approxeq 11^{\circ}$ \\
		\hline
		Cut-off distance $d$ such that $\eta(d)=\eta_c$ & 510 km\\
		\hline   
		Total number of SISA Walker point $(I)$ & 3\\
		\hline
		Number of orbits $N_1,N_2,N_3$ & 50, 50, 50 \\
			\hline
			Number of satellites per orbit $M_1,M_2,M_3$ & 100, 100, 100\\
			\hline
			Orbit radii $r_1,r_2,r_3$ (km) & $6871$, $6896$, $6921$ \\
			\hline
			Path loss exponent & $\alpha=2$\\
			\hline 
	\end{tabular}
\end{table}
Figures \ref{fig:sinrvalidationfig1}--\ref{fig:sinrvalidationfig4} validate the analytical SINR coverage expression for the case of Gamma fading with shape parameter $\nu=2$ and scale parameter $1/2$, corresponding to unit-mean fading. A common observation across all four figures is that the theoretical curves closely match MC simulation results over the entire SINR range. The agreement remains accurate not only in the moderate coverage region but also in the low and high SINR regimes, confirming the correctness of the derived coverage formula and its numerical implementation. In addition, all curves exhibit the expected sigmoidal behavior, where the coverage probability gradually decreases as the SINR threshold increases. The close overlap between the analytical predictions and simulation markers demonstrates that the proposed framework accurately captures both the serving-satellite geometry and the aggregate interference generated by the heterogeneous Walker constellation.

Considering each figure individually, Figs. \ref{fig:sinrvalidationfig1} and \ref{fig:sinrvalidationfig3} investigate homogeneous inclination configurations, where all three SISA layers share the same inclination angle. In both figures, increasing the inclination angle from $30^\circ$ to $50^\circ$ shifts the coverage curves toward higher SINR thresholds, indicating improved coverage performance. This improvement suggests that larger inclination angles provide more favorable satellite visibility and serving-link geometry for the considered users. Similarly, Figs. \ref{fig:sinrvalidationfig2} and \ref{fig:sinrvalidationfig4} examine heterogeneous inclination configurations. As the inclination set changes from $[30^\circ,30^\circ,30^\circ]$ to $[30^\circ,50^\circ,70^\circ]$, the coverage probability consistently increases for a fixed SINR threshold. This trend indicates that introducing constellation diversity through multiple inclination angles can improve the probability of establishing stronger serving links while maintaining manageable interference levels.

Comparing Figs. \ref{fig:sinrvalidationfig1}--\ref{fig:sinrvalidationfig4} provides additional geometric insight into the impact of user latitude and orbital inclination diversity. First, the relative ordering of the curves remains unchanged across all figures, suggesting that the inclination structure is a dominant factor governing the SINR distribution. Second, the performance gain obtained by increasing the inclination angles is observed for both the equatorial user and the user located at latitude $30^\circ$, indicating that the benefit is robust w.r.t user location. Third, the heterogeneous inclination configurations considered in Figs. \ref{fig:sinrvalidationfig2} and \ref{fig:sinrvalidationfig4} generally achieve performance comparable to or better than their homogeneous counterparts, suggesting that combining multiple orbital inclinations can improve spatial coverage uniformity. Overall, the four figures collectively demonstrate that increasing orbital inclination and introducing inclination diversity both contribute to improved downlink SINR coverage while preserving excellent agreement between theory and simulation.

\begin{figure}
	\centering
	\includegraphics[width=1\linewidth]{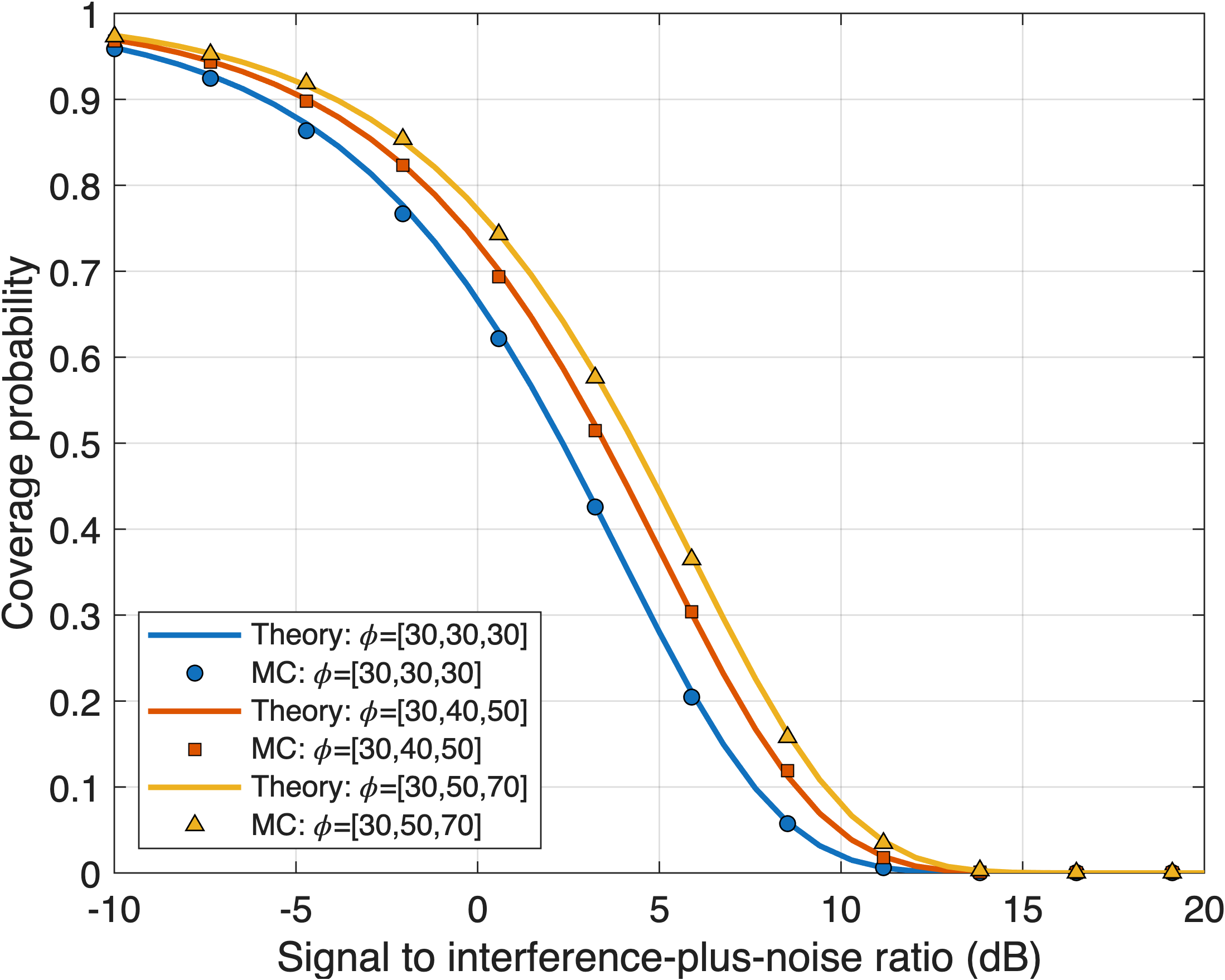}
	\caption{The SINR coverage probability of the typical receiver at $l_u=0^{\circ}$.}
	\label{fig:sinrvalidationfig6}
\end{figure}

\begin{figure}
	\centering
	\includegraphics[width=1\linewidth]{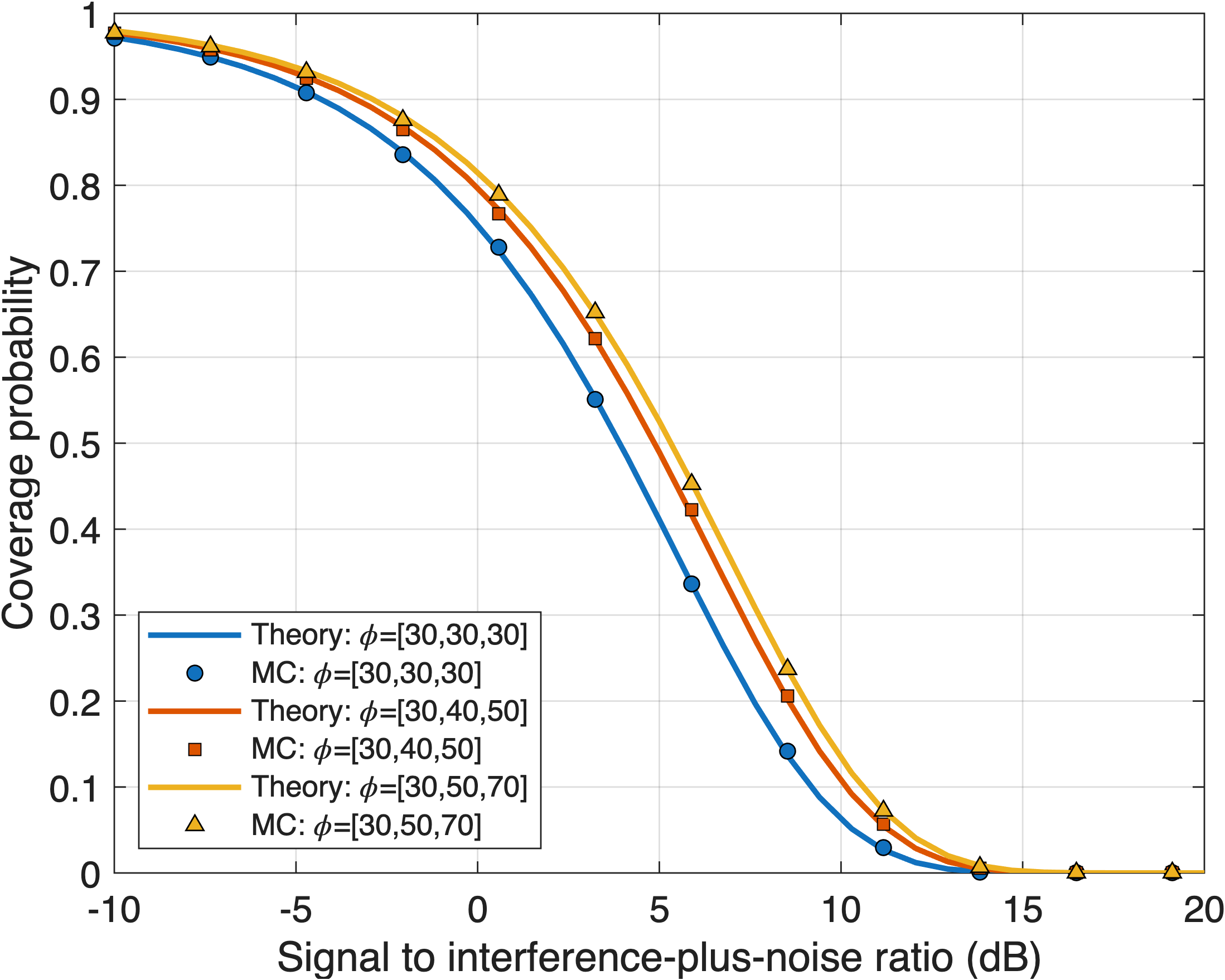}
	\caption{The SINR coverage probability of the typical receiver at $l_u=0^{\circ}$.}
	\label{fig:sinrvalidationfig8}
\end{figure}

Figs. \ref{fig:sinrvalidationfig6} and \ref{fig:sinrvalidationfig8} illustrates the SINR coverage probability of the typical user at latitude $30^{\circ}$ and $0^{\circ}, $ respectively, when $H$ is an exponential random variable. 
An interesting observation from Figs. \ref{fig:sinrvalidationfig6} and \ref{fig:sinrvalidationfig8} is that the qualitative behavior remains nearly unchanged compared to the Gamma fading case shown in Figs. \ref{fig:sinrvalidationfig2} and \ref{fig:sinrvalidationfig4}. In particular, increasing the diversity of orbital inclinations consistently improves the SINR coverage probability, while the relative gap between the three configurations remains similar. This indicates that the dominant factor governing the coverage performance is the constellation geometry rather than the specific fading model. Consequently, the benefits of inclination diversity appear to be robust across different channel fading assumptions.

\begin{proposition}\label{P:2}
	Let us further define by $\SINR(l_u,t;x)$ the SINR of the typical receiver at time $t$ with the initial condition $x\in \bS$ as follows: 
	\begin{equation}
		\SINR(l_u,t;x) = \frac{p g_MH_{\star}(t)\|\vec{X}_{\star}-\vec{u}_0\|^{-\alpha}}{\sum\limits_{X_i\in\Psi_{t;x}(\bar{S})\setminus X_\star} \frac{p G_iH_i(t)}{\|\vec{X}_{i}-\vec{u}\|^{\alpha}}+\sigma^2}, \label{eq:40}
	\end{equation}
	where $\Psi_{t;x}$ denotes the satellite point process at time $t$ for the initial condition $x$. Assume the fading processes $H_i(\cdot) $ are stationary and mixing processes, independent of everything else.
	
	Then, if speeds are rationally independent, for $\overline{\mathcal Q}$-almost all initial conditions $x\in\mathbb S$ and for almost every fading sample path, we have
		\begin{align}
			\lim_{T\to \infty} \frac 1 T \int_0^T \mathbbm{1}(\SINR(l_u,t;x)>\tau)\!\diff  t 
			=\bP(\SINR(l_u)>\tau).\label{eq:prop2}
		\end{align}
\end{proposition}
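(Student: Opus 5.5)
The plan is to enlarge the dynamical system so that it carries the fading as well as the constellation, and then apply Birkhoff's theorem to the product. Let $(\Omega_H,\mathcal F_H,\overline{\mathcal Q}_H,\{\vartheta_t\})$ be the canonical stationary flow of the fading processes $\{H_i(\cdot)\}$. These processes are independent and each is stationary and mixing, so this flow is measure preserving and mixing. We then look at the product flow $\{R_t\times\vartheta_t\}$ on $\mathbb S\times\Omega_H$ with product measure $\overline{\mathcal Q}\times\overline{\mathcal Q}_H$. Invariance of the product measure follows at once from Theorem \ref{Theorem:1:basic} and from stationarity of the fading. The indicator $\mathbbm{1}(\SINR(l_u,t;x)>\tau)$ can be written as $h\bigl(R_t(x),\vartheta_t\xi\bigr)$, where $\xi\in\Omega_H$ is the fading sample path. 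Here $h(y,\xi)=\mathbbm{1}\bigl(\SINR(y,\xi)>\tau\bigr)$, and $\SINR(y,\xi)$ is computed from the point process $\Xi(y)$ via Eqs. \eqref{eq:X_star}, \eqref{eq:intdef} together with the time-zero fading marks $\xi(0)$. This $h$ is a bounded measurable function: $\Xi$ is continuous by Theorem \ref{T:rev1:conjugatcy}, the association rule and the visibility set $\mathcal J$ are Borel functions of $y$, and the SINR is a measurable function of finitely many fading marks.

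\textbf{Key step: ergodicity of the product.} By Theorem \ref{Theorem:speedratio}, rational independence makes $\{R_t\}$ ergodic, and in fact uniquely ergodic with discrete (pure point) spectrum, being a Kronecker flow on a torus. The fading flow is mixing, hence weakly mixing. A standard fact from ergodic theory (see \cite{walters2000introduction}) says that the product of an ergodic system with a weakly mixing system is ergodic. I would invoke it directly, or argue through spectra: a nonconstant invariant $L^2$ function of the product would force a common eigenvalue between the Kronecker factor and the fading flow, and a weakly mixing flow has no eigenvalues other than the trivial one. Since the product is ergodic, Birkhoff's theorem applied to $h$ gives the following: for $(\overline{\mathcal Q}\times\overline{\mathcal Q}_H)$-almost every $(x,\xi)$, the time average in \eqref{eq:prop2} converges to $\int h\,\diff(\overline{\mathcal Q}\times\overline{\mathcal Q}_H)$. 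By Fubini, the exceptional null set has null sections for $\overline{\mathcal Q}$-almost every $x$. This yields the quantifier ``for $\overline{\mathcal Q}$-almost all $x$ and almost every fading sample path''.

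\textbf{Identifying the limit.} The space average $\int h\,\diff(\overline{\mathcal Q}\times\overline{\mathcal Q}_H)$ is exactly the probability in Eq. \eqref{eq:sinrdef}. Under the product measure the marks $H_i(0)$ are i.i.d. with the law of $H$ and independent of $(\theta,\omega)$, and pushing forward by $\Xi$ gives $\overline{\mathcal Q}_\Psi\times\overline{\mathcal Q}_H$. This is precisely the quantity $\bP(\SINR(l_u)>\tau)$ evaluated in Theorem \ref{Theorem:Coverage}.

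\textbf{Expected obstacle.} The main difficulty is the ergodicity of the product. Ergodicity of $\{R_t\}$ alone is not enough, because a product of two ergodic flows need not be ergodic. This is exactly why the statement assumes mixing fading, so that the weak-mixing/discrete-spectrum disjointness argument applies. A secondary point is that $h$ is discontinuous: it jumps when the serving satellite changes or when a satellite crosses the horizon. So unique ergodicity cannot be used to upgrade the result to every initial condition, and the conclusion is stated only for almost all $x$, which is what the measurable Birkhoff theorem delivers.
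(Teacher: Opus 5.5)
Your proposal is correct and follows the same route as the paper. The paper's proof simply asserts that the constellation flow and the fading process are jointly stationary and ergodic under $\overline{\mathcal{Q}}_{\Psi}\times\overline{\mathcal{Q}}_H$ and then applies Birkhoff's theorem; you supply the justifications the paper leaves implicit, namely that an ergodic Kronecker flow times a weakly mixing flow is ergodic, and Fubini to get the ``for almost all $x$ and almost every fading path'' quantifier. You also rightly note that the indicator is discontinuous, so unique ergodicity cannot upgrade the result to all initial conditions, even though the paper's remark after the proposition claims it does.
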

\begin{IEEEproof}If speeds are rationally independent and $H(\cdot)$ is an independent and mixing process, the dynamics $\Psi_{t}$ and $H(t)$ are jointly stationary and ergodic ergodic w.r.t. their invariant measure $\overline{\mathcal{Q}}_{\Psi}\times \overline {\mathcal{Q}}_H$.  This completes the proof.
\end{IEEEproof}
Note that the first term in Eq. \eqref{eq:prop2} is the  long-term time average of the indicator function that takes the value of $1$ if the SINR of the typical receiver at a given time is greater than $\tau.$ This term practically accounts for the time fraction of the typical receiver over which its SINR is greater than $\tau$. Then, the last Proposition theoretically proves that the time fraction of SINR coverage of a typical receiver is in fact the same as the probability that the SINR of the typical receiver is greater than a given threshold $\tau$, averaged w.r.t. the invariant distribution ${\overline{\mathcal Q}}_{\Psi}$. This holds for all initial conditions.

\subsection{Ergodic Receiver Throughput}
The rate of the typical receiver at latitude $l_u$ is defined by the ergodic capacity of the typical receiver in the downlink communication as follows:
\begin{equation}
	\cR = \bE\left[\log_2\left(1+\SINR(l_u)\right)\right]\label{eq:def},
\end{equation}
where $\SINR(l_u)$ denotes the positive random variable for the SINR of the typical receiver at latitude $l_u.$ The rate of the typical receiver represent the average amount of total bits/sec/Hz in the downlink communication from satellite transmitters to receivers on Earth.
\begin{theorem}
	When $H$ is a Gamma random variable with shape parameter $\nu$ and scale parameter $1/\lambda,$  the ergodic throughput of the typical receiver at latitude $l_u$ is given by Eq. \eqref{eq:Theorem_rate} with $X_\star$ and $\mathcal{J}$ given by Eqs. \eqref{eq:X_star} and \eqref{eq:J}, respectively. 
	
	For the special case when the fading distribution $H$ follows an exponential random variable with mean one, the ergodic throughput of the typical receiver at latitude $l_u$ is given by Eq. \eqref{eq:Theorem_rate2}.
\end{theorem}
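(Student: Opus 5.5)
The plan is to reduce the rate to an integral of the coverage probability of Theorem~\ref{Theorem:Coverage}. Since $\log_2(1+\SINR(l_u))$ is a nonnegative random variable, the layer-cake identity gives
\begin{equation*}
\cR=\int_0^\infty \bP\left(\log_2(1+\SINR(l_u))>x\right)\diff x=\int_0^\infty \bP\left(\SINR(l_u)>2^x-1\right)\diff x .
\end{equation*}
The change of variable $\tau=2^x-1$, with $\diff x=\diff\tau/((1+\tau)\ln 2)$, turns this into
\begin{equation*}
\cR=\frac{1}{\ln 2}\int_0^\infty \frac{\bP(\SINR(l_u)>\tau)}{1+\tau}\diff\tau .
\end{equation*}
Here $\bP$ is again taken w.r.t. $\overline{\mathcal Q}_{\Psi}\times\overline{\mathcal Q}_H$, exactly as in the coverage analysis.

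The steps follow the proof of Theorem~\ref{Theorem:Coverage}.

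\textbf{Conditioning.} First condition on $\{\theta,\omega\}$. Given these angles, $\Psi$ is deterministic, so $X_\star$ from Eq.~\eqref{eq:X_star}, the index set $\mathcal J$ from Eq.~\eqref{eq:J}, and the coefficients $a_{i,j,k}$ are all fixed functions of $\{\theta,\omega\}$.

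\textbf{Gamma fading.} In the Gamma case, insert the conditional coverage expression
\begin{equation*}
\frac{1}{\Gamma(\nu)}\bE_H\cdots\bE_H\left[\Gamma\left(\nu,\lambda\tau\left(\sigma^2+\sum_{\mathcal J}a_{i,j,k}H_{i,j,k}\right)\|\vec X_\star-\vec u\|^{\alpha}/(pg_M)\right)\right]
\end{equation*}
into the $\tau$-integral.

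\textbf{Exponential fading.} For $\nu=\lambda=1$, insert instead the closed form
\begin{equation*}
e^{-\tau\sigma^2\|\vec X_\star-\vec u\|^\alpha/(pg_M)}\prod_{\mathcal J}\left(1+\tau a_{i,j,k}\|\vec X_\star-\vec u\|^\alpha/(pg_M)\right)^{-1}
\end{equation*}
from Eq.~\eqref{pcov-3}.

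\textbf{Averaging.} Finally, average over $\overline{\mathcal Q}$ by integrating $\theta$ over $[0,2\pi/N_1)$ and each $\omega_i$ over $[0,2\pi/M_i)$, with the normalization $\frac{N_1}{2\pi}\prod_i\frac{M_i}{2\pi}$. This yields Eqs.~\eqref{eq:Theorem_rate} and \eqref{eq:Theorem_rate2}, with the $\tau$-integral placed either innermost or outermost.

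The main obstacle is justifying the interchange of the $\tau$-integral with the $\overline{\mathcal Q}$-integral and the fading expectations. I would handle it with Tonelli's theorem, since every integrand is nonnegative and measurable in $(\tau,\theta,\omega,H)$; measurability follows because $X_\star$ and $\mathcal J$ are piecewise continuous in the angles. Tonelli still requires finiteness of $\cR$ to be meaningful. This holds because $\sigma^2>0$ and $\|\vec X_\star-\vec u\|$ is bounded, so the conditional coverage decays at least like $e^{-c\tau}$ in the exponential case and like the Gamma tail $\Gamma(\nu,c\tau)$ in general. As a result $\int_0^\infty\bP(\cdot)/(1+\tau)\,\diff\tau<\infty$ uniformly in the angles.

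If one also wants the time-average interpretation, apply the ergodic argument of Proposition~\ref{P:2} to the function $\log_2(1+\SINR)$. That function is nonnegative and measurable, so Birkhoff's theorem applies for almost all initial conditions.
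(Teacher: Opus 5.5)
Your proposal is correct and follows the paper's own argument: the layer-cake identity $\cR=\int_0^\infty\bP(\SINR(l_u)>2^y-1)\diff y$, then inserting the conditional coverage expressions from Theorem~\ref{Theorem:Coverage} and averaging over $\overline{\mathcal Q}$; your Tonelli justification is extra rigor the paper omits (though Tonelli needs no finiteness, and the uniform decay actually rests on the lower bound $\|\vec X_\star-\vec u\|\ge\min_i r_i-r_e>0$), and your substitution $\tau=2^y-1$ is a harmless detour that must be undone to match the printed $y$-form. Note, however, that doing the insertion faithfully, as you do, gives the argument $\lambda(2^y-1)\bigl(\sigma^2+\sum_{\mathcal J}a_{i,j,k}H_{i,j,k}\bigr)\|\vec X_\star-\vec u\|^{\alpha}/(pg_M)$ in the Gamma case and the factors $\bigl(1+(2^y-1)a_{i,j,k}\|\vec X_\star-\vec u\|^{\alpha}/(pg_M)\bigr)^{-1}$ in the exponential case, so you do not recover Eqs.~\eqref{eq:Theorem_rate} and \eqref{eq:Theorem_rate2} verbatim: as printed they drop $\lambda$ and put $\|\vec X_\star-\vec u\|^{\alpha}$ in the denominator of the product term, and these are typos that your derivation corrects.
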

\begin{IEEEproof}
	Inside the expectation in Eq \eqref{eq:def}, the expression $\log_2(1+\SINR)$ is a random variable with its CCDF given by $\bP(\log_2(1+\SINR)>u) = \bP(\SINR>2^{u}-1) $, the coverage probability evaluated at $2^u-1$ with dummy variable $u$. Since the rate is the expectation of the positive random variable $\log_2(1+\SINR)$, and the mean of a positive random variable $X$ is evaluated by $\int_0^{\infty}\bP(X>x)\diff x $, we have $\mathcal{R} 
	=\int_{0}^{\infty}\bP(\SINR\geq 2^{u}-1)\diff u.\nnb$
	To get the final result, we use the coverage probability given by Eq. \eqref{eq:Theoremcov} or Eq. \eqref{eq:Theoremcov2} to obtain the ergodic rate of the typical receiver for corresponding fading distribution.
\end{IEEEproof}

\begin{proposition}\label{P:3}
	For $\overline{\mathcal Q}$-almost every initial condition $x\in\mathbb S$ and for almost all fading sample path, the average throughput of the typical receiver at latitude $l_u$ satisfies
	\begin{multline}
		\bE\left[\log_2\left(1+\SINR(l_u)\right)\right]\\
		=\lim_{T\to\infty}\frac1T
		\int_0^T
\left.\log_2\left(1+\SINR(l_u,t;x)\right)\right.
		\diff t,
	\end{multline}
	where $\SINR(l_u,t;x)$ is defined in Eq. \eqref{eq:40}. 
\end{proposition}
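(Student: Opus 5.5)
The plan is to view $\log_2(1+\SINR(l_u,t;x))$ as a function evaluated along the trajectory of a joint stationary ergodic flow, apply Birkhoff's theorem, and identify the space average with the ergodic rate $\cR$ of the previous theorem. Concretely, I will consider the product system $(R_t(x), H(t))$ on $\mathbb S\times\Omega_H$, where $\Omega_H$ is the path space of the fading processes $\{H_i(\cdot)\}$ equipped with the time-shift flow and its stationary law $\overline{\mathcal Q}_H$. Then $\SINR(l_u,t;x)=F(R_t(x),\vartheta_t h)$ for a fixed measurable map $F$. Indeed, $\Psi_{t;x}=\Xi(R_t(x))$ by the isomorphism of Theorem~\ref{T:rev1:conjugatcy}. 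The serving point $X_\star$, the visible set $\mathcal J$ and the gains $a_{i,j,k}$ are measurable functions of the state in $\mathbb S$, as already used in Theorem~\ref{Theorem:2} and Theorem~\ref{Theorem:Coverage}, and the fades enter through time-zero evaluation of the shifted path. Hence the observable $g=\log_2(1+F)$ is a non-negative measurable function on the product space.

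The first key step is to establish ergodicity of the product flow with respect to $\overline{\mathcal Q}\times\overline{\mathcal Q}_H$. Under rational independence, Theorem~\ref{Theorem:speedratio} gives that $\{R_t\}$ is (uniquely) ergodic. The fading flow is mixing by assumption. A standard fact (see \cite{walters2000introduction}) is that the product of an ergodic system with a (weakly) mixing system is ergodic. This is the same ingredient invoked for Proposition~\ref{P:2}, and I will cite it in that form.

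The second step applies Birkhoff's ergodic theorem for flows to $g$. Since $g\ge 0$, the theorem holds even if $g$ is not integrable, with both sides possibly infinite. It yields, for $\overline{\mathcal Q}\times\overline{\mathcal Q}_H$-almost every pair $(x,h)$,
\[
\lim_{T\to\infty}\frac1T\int_0^T g(R_t(x),\vartheta_t h)\diff t=\int g \,\diff(\overline{\mathcal Q}\times\overline{\mathcal Q}_H).
\]
By Fubini, the exceptional set can be described as $\overline{\mathcal Q}$-almost every $x$ and, for each such $x$, almost every fading path. The right-hand side is exactly $\bE[\log_2(1+\SINR(l_u))]$ computed under $\overline{\mathcal Q}_\Psi\times\overline{\mathcal Q}_H$, which is the definition of $\cR$. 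A cross-check comes from the layer-cake representation $\cR=\int_0^\infty\bP(\SINR>2^u-1)\diff u$: integrating Proposition~\ref{P:2} over $u$ formally gives the same identity. The direct Birkhoff argument avoids having to justify exchanging the limit in $T$ with the integral in $u$.

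The main obstacle is the product-ergodicity step together with measurability. Two points need care. First, mixing of the fading must be the joint mixing of the whole family $\{H_i(\cdot)\}$; this family is countable, since at most $\sum_i N_iM_i$ satellites can be relevant. Second, the index set $\mathcal J$ changes with $t$. I will handle the second point by labeling fades by satellite identity $(i,j,k)$ rather than by visibility rank, so that $F$ is jointly measurable. Finiteness of $\cR$ is not needed for the argument. If it is wanted, it follows from the bound $\SINR\le p g_M H_\star (r_{\min}-r_e)^{-\alpha}/\sigma^2$, the fact that $\bE\log(1+cH)<\infty$ for Gamma fading, and the fact that the serving distance is bounded below by $\min_i r_i-r_e$.
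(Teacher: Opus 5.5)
Your proposal is correct and is essentially the paper's argument. The paper's proof just says the result follows from Proposition~\ref{P:2}, and that proof consists of the joint stationarity and ergodicity of $(\Psi_t,H(t))$ under $\overline{\mathcal Q}_\Psi\times\overline{\mathcal Q}_H$, followed implicitly by Birkhoff's theorem. You fill in what the paper leaves unstated (ergodic $\times$ weakly mixing is ergodic; Fubini to pass to a.e.\ sections) and apply Birkhoff directly to the non-negative observable $\log_2(1+\SINR)$, which is the rigorous reading of that one-line proof: deriving it literally from the indicator statement via $\cR=\int_0^\infty\bP(\SINR>2^u-1)\diff u$ would need an extra uniform-integrability step to interchange $\lim_{T\to\infty}$ and $\int\diff u$.
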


\begin{IEEEproof} The proof follows from \ref{P:2}
\end{IEEEproof}

\begin{figure*}
	\begin{align}
		\cR=
		&\frac{1}{\Gamma(\nu)}\frac{1}{\left.\frac{2\pi}{N_1}\prod_{i=1}^I \frac{2\pi}{M_i}\right.}\int_0^{\infty}\int_{0}^{\frac{2\pi}{N_1}}\!\int_{0}^{\frac{2\pi}{M_1}}\!\cdots\int_{0}^{\frac{2\pi}{M_I}}\!\left.\underbrace{\bE_H\cdots\bE_H}_{|\mathcal{J}|}\left[\Gamma\left(\nu,\frac{(2^{y}-1)(\sigma^2+\sum\limits_{(i,j,k)\in\mathcal J}a_{i,j,k}H_{i,j,k})}{pg_M\|\vec{X}_\star-\vec{u}\|^{-\alpha}}\right)\right]\right.\underbrace{\diff \omega_{1}  \cdots \diff \omega_{I}}_{I} \diff \theta \diff y, \label{eq:Theorem_rate}\\
		\cR=
		&\frac{1}{\left.\frac{2\pi}{N_1}\prod_{i=1}^I \frac{2\pi}{M_i}\right.}\int_{0}^{\infty}\!\int_{0}^{\frac{2\pi}{N_1}}\!\int_{0}^{\frac{2\pi}{M_1}}\!\cdots \!\int_{0}^{\frac{2\pi}{M_I}}\left(e^{\frac{- (2^{y}-1) \sigma^2\|X_\star-\vec{u}\|^{\alpha} }{pg_M}}\prod_\mathcal{J} \left(\frac{1}{1+\frac{(2^{y}-1) a_{i,j,k}}{pg_M \|\vec{X}_\star-\vec{u}\|^{\alpha}}}\right)\right)\underbrace{\diff \omega_{1} \cdots \diff \omega_{I}}_{I} \diff \theta \diff y . \label{eq:Theorem_rate2}
	\end{align}
				\begin{align}
		\cL_T(s)=
		&\frac{1}{\left.\frac{2\pi}{N_i}\prod_{i=1}^I \frac{2\pi}{M_i}\right.}\!\int_{0}^{\frac{2\pi}{N_1}}\!\int_{0}^{\frac{2\pi}{M_1}}\!\cdots\!\int_{0}^{\frac{2\pi}{M_I}}\prod_{\bar{\mathcal J}} \left(1+{ sa_{i,j,k}}/{\lambda}\right)^{-\nu}\underbrace{\diff \omega_{1}  \cdots \diff \omega_{I}}_{I}\diff \theta , \label{eq:Theorem_tot}\\
		\cL_T(s)=
		&\frac{1}{\left.\frac{2\pi}{N_i}\prod_{i=1}^I \frac{2\pi}{M_i}\right.}\int_{0}^{\frac{2\pi}{N_1}}\int_{0}^{\frac{2\pi}{M_1}}\cdots\!\int_{0}^{\frac{2\pi}{M_I}}\left. \prod_{\bar{\mathcal J}} \left(1+{sa_{i,j,k}}\right) \right.\underbrace{\diff \omega_{1}  \cdots \diff \omega_{I}}_{I} \diff \theta . \label{eq:Theorem_tot2}
	\end{align}
	\hrule
\end{figure*}

\subsection{Total Received Power}
The total received power gives the total amount of  energy received by a random location of latitude $l_u$. This metric indicates the interference incur by the Walker satellites on orbits and therefore will be used to directly measure the impact of satellite downlink communications to non-terrestrial receivers such as mobile cellular handsets or vehicle transceivers, provided that they are unable to communicate with satellites. Let $T$ denote the total received  power at the typical receiver and it is defined as
\begin{equation}
	T = \sum_{X_{i,j,k}\in\Psi(\bar{S})} \frac{p G_{i,j,k}H_{i,j,k}}{\|\vec{X}_{i,j,k}-\vec{u}\|^{\alpha}}.
\end{equation}
The total received signal power is a random variable and we characterize its distribution by deriving its Laplace transform.
\begin{theorem}
	If $H$ follows a Gamma distribution with parameters $\nu$ and $1/\lambda,$ the Laplace transform of the total received signal power is given by Eq. \eqref{eq:Theorem_tot}.

	On the other hand, if $H$ follows an exponential random variable with mean one, the Laplace transform of the total received signal power is given by Eq. \eqref{eq:Theorem_tot2}.
\end{theorem}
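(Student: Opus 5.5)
The plan is to mirror the proof of Theorem~\ref{Theorem:Coverage}: condition on the state $\{\theta,\omega\}\equiv\{\theta,\omega_1,\ldots,\omega_I\}\in\mathbb S$, compute the conditional Laplace transform of $T$, and then integrate against $\overline{\mathcal Q}$. First I would write
\begin{equation*}
\cL_T(s)=\bE\left[e^{-sT}\right]=\bE\left[\bE\left[\left.e^{-sT}\right|\{\theta,\omega\}\right]\right],
\end{equation*}
where the outer expectation is with respect to the product of uniform distributions $\overline{\mathcal Q}$ in Eq.~\eqref{eq:Q}. By Theorem~\ref{T:rev1:conjugatcy} the point process $\Psi=\Xi(\theta,\omega)$ is a deterministic function of the angles. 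I would define $\bar{\mathcal J}=\{X_{i,j,k}\in\Psi \,|\, r_e^2\le \langle \vec X_{i,j,k},\vec u\rangle\}$, the set of all visible satellites. This is $\mathcal J$ of Eq.~\eqref{eq:J} without removing $X_\star$, since $T$ includes every visible satellite, the nearest one included, and each with gain $G_{i,j,k}$. The set $\bar{\mathcal J}$ and the coefficients $a_{i,j,k}$ of Eq.~\eqref{eq:a_ijk} are then deterministic given $\{\theta,\omega\}$, and $\bar{\mathcal J}$ is finite.

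Conditionally on $\{\theta,\omega\}$, $T=\sum_{\bar{\mathcal J}} a_{i,j,k}H_{i,j,k}$ is a finite sum of independent terms, because the $H_{i,j,k}$ are i.i.d. and independent of the geometry. The conditional Laplace transform therefore factorizes as $\prod_{\bar{\mathcal J}}\cL_H(s a_{i,j,k})$, exactly as in Eq.~\eqref{eq:Lplace}. For $H\sim\mathrm{Gamma}(\nu,1/\lambda)$ I would use $\cL_H(x)=(1+x/\lambda)^{-\nu}$, which follows from integrating $e^{-xh}$ against the density $f_H$ given in the proof of Theorem~\ref{Theorem:Coverage}. This gives $\prod_{\bar{\mathcal J}}(1+s a_{i,j,k}/\lambda)^{-\nu}$.

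Integrating this bounded, measurable function of $(\theta,\omega)$ over $[0,2\pi/N_1)\times\prod_i[0,2\pi/M_i)$ with the normalizing density $\bigl(\frac{2\pi}{N_1}\prod_i\frac{2\pi}{M_i}\bigr)^{-1}$ yields Eq.~\eqref{eq:Theorem_tot}. Fubini applies because the integrand lies in $[0,1]$. The exponential case is $\nu=\lambda=1$, which gives the factor $(1+s a_{i,j,k})^{-1}$. I would note that Eq.~\eqref{eq:Theorem_tot2} should therefore carry the exponent $-1$, i.e.\ it should read $\prod_{\bar{\mathcal J}}(1+s a_{i,j,k})^{-1}$.

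The only step needing care is the measurability of $(\theta,\omega)\mapsto \prod_{\bar{\mathcal J}}(\cdot)$. The visibility set $\bar{\mathcal J}$ changes on a finite union of level sets of the continuous maps $\langle \vec X_{i,j,k},\vec u\rangle$, and the antenna gain $G$ jumps where $\eta(d)=\eta_c$. I would handle this by writing the product as $\prod_{i,j,k}\bigl[1+\mathbbm 1\{r_e^2\le\langle\vec X_{i,j,k},\vec u\rangle\}\bigl((1+s a_{i,j,k}/\lambda)^{-\nu}-1\bigr)\bigr]$. This is a finite product of compositions of continuous functions with indicators of closed sets, hence Borel, so the integral is well defined.
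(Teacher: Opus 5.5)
Your proposal is correct and follows essentially the same route as the paper: condition on $\{\theta,\omega\}$, factorize the conditional Laplace transform of $T$ over the visible set $\bar{\mathcal J}$ into $\prod_{\bar{\mathcal J}}\cL_H(s a_{i,j,k})$, insert the Gamma (resp.\ exponential) transform, and integrate against $\overline{\mathcal Q}$. Your correction that Eq.~\eqref{eq:Theorem_tot2} should read $\prod_{\bar{\mathcal J}}(1+s a_{i,j,k})^{-1}$ is right; the paper's proof also has a sign slip, $\cL_H(-sa_{i,j,k})$, and normalizes by $2\pi/N_i$ instead of $2\pi/N_1$, while your measurability remark is a harmless addition the paper leaves implicit.
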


\begin{figure*}

	\hrule
\end{figure*}

\begin{IEEEproof}
	Let $\overline{\mathcal{J}}$ be the set of visible satellite defined as follows: \begin{align}
		\bar{\mathcal{J}} %&= \{X_{i,j,k}\in\Psi(\bar{S})| \}\nnb\\
		&=\left\{ X_{i,j,k}\in\Psi \left. \text{ such that }r_e^2 \leq \left.\langle \vec{X}_{i,j,k}, \vec{u} \rangle\right. \right.\right\}\nnb.
	\end{align}
Conditionally on  $\{{{\theta}}, {{\omega}}\}$, the Laplace transform of the total received signal power is given by
	\begin{align}
		\cL_T(s) %&= \bE\left[\exp\left(-sT\right)\right]\nnb\\
				 &= \bE\left[\prod_{ \bar{\mathcal{J}}} \exp\left(\frac{-spG_{i,j,k}H_{i,j,k}}{\|\vec{X}_{i,j,k}-\vec{u}\|^\alpha}\right)\right]\nnb\\
	&=\bE\left[\left.\prod_{\bar{\mathcal{J}}}\cL_H\left(-sa_{i,j,k}\right)\right|\{{{\theta}},{{\omega}}\}\right] \label{eq:Totalrxpower},
	\end{align}
	To get Eq. \eqref{eq:Totalrxpower}, we first use the law of total probability and then employ the fact that conditionally on $\{\bar \theta,{{\omega}}\}$, the Laplace transform of $T$ is given by the product of Laplace transforms of $H$ evaluated at points that are given by measurable functions of $\{\bar \theta,{{\omega}}\}$. Then, employing the Laplace transform expression for various $H$ gives the final result.
\end{IEEEproof}

\section{Conclusion}
In this paper, we develop a stochastic geometry based dynamical system framework for modeling and analyzing heterogeneous satellite networks with multi-altitude orbits. By representing satellite networks as a superposition of multiple Walker point processes, we characterize both the spatial network layout and its temporal evolution. Within this framework, we derive a universal invariant measure and the exact conditions under which the proposed satellite system is ergodic, thereby establishing a basis for estimating long-term time averages via spatial averages w.r.t. the invariant measure. In the ergodic regime, we further derive the nearest-satellite distance distribution, the SINR coverage probability under various fading models, the aggregate interference, and the ergodic rate. Taken together, the proposed framework provides a tractable tool for evaluating downlink performance in heterogeneous LEO satellite networks with multi-altitude orbits.

The framework also opens several directions for future work. One example is satellite-to-satellite routing, where the temporal structure of the constellation determines the local delay, the number of potential relay satellites, and the total routing distance. The framework developed in this work enables the investigation of the temporal evolution of inter-satellite routing in a natural manner. In addition, the proposed model can be combined with isotropic Cox-based constructions to describe hybrid network architectures in which regular and random orbits coexist. Finally, the analytical tools presented here can be extended to analyze and optimize various interference mitigation and resource management algorithms in next-generation satellite networks.

\section*{Acknowledgment}
The work of Chang-Sik Choi was supported by NRF RS-2024-00334240.
The work of Francois Baccelli was supported by the European Research Council NEMO project (grant ERC 788851), the Horizon Europe INSTINCT project (grant SNS 101139161), the France 2030 projects PEPR réseaux du Futur project  (grant ANR-22-PEFT-0010), and by 5G NTN mmWave (BPIFrance). This joint work was also supported by a South Korea--France Hubert Curien grant.

\bibliographystyle{IEEEtran}
\bibliography{ref2}

\end{document}